\newtheorem{theorem}{Theorem}
\newtheorem{lemma}{Lemma}
\newtheorem{example}{Example}
\newcommand{\ubar}[1]{\text{\b{$#1$}}}
\begin{document}

\title{The Unreliable Job Selection and Sequencing Problem}

\author{
Alessandro Agnetis \orcidlink{0000-0001-5803-0438} \thanks{Dipartimento di
Ingegneria dell'Informazione e Scienze Matematiche, Universit\`a di Siena}
\and
Roel Leus \orcidlink{0000-0002-9215-3914}
\thanks{Research Centre for Operations Research and Statistics, KU Leuven}
\and
Emmeline Perneel \orcidlink{0000-0003-0504-024X}
\footnotemark[2]
\and 
Ilaria Salvadori \orcidlink{0009-0003-7248-8017}
\footnotemark[1]
}

\date{}
\maketitle
\medskip \noindent

\begin{abstract}
\noindent We study a stochastic single-machine scheduling problem, denoted the Unreliable Job Selection and Sequencing Problem (UJSSP). Given a set of jobs, a subset must be selected for processing on a single machine that is subject to failure. Each job incurs a cost if selected and yields a reward upon successful completion. A job is completed successfully only if the machine does not fail before or during its execution, with job-specific probabilities of success. The objective is to determine an optimal subset and sequence of jobs to maximize the expected net profit. We analyze the computational complexity of UJSSP and prove that it is NP-hard in the general case. The relationship of UJSSP with other submodular selection problems is discussed, showing that the special cases in which all jobs have the same cost or the same failure probability can be solved in polynomial time. To compute optimal solutions, we propose a compact mixed-integer linear programming formulation, a dynamic programming algorithm, and two novel stepwise exact algorithms. We demonstrate that our methods are capable of efficiently solving large instances by means of extensive computational experiments. We further show the broader applicability of our stepwise algorithms by solving instances derived from the Product Partition Problem.  
\end{abstract}

\vspace{2ex} \noindent{\textbf{Keywords: single-machine scheduling, job selection, unreliable jobs, submodular optimization, algorithm design}}

\section{Introduction}\label{sec:intro}
In many production and computing environments, machines are subject to random failures. In this paper, we consider a single-machine problem with permanent breakdowns. We are given a set $J = \{ 1, \ldots, n\}$ of jobs. Each job has a cost $c_j\geq 0$ which has to be paid if the job is selected, representing preparatory or setup expenses that occur regardless of whether the job completes successfully. A reward $r_j\geq 0$ is achieved if the job is successfully carried out, representing the profit, output value, or computational result obtained upon successful completion. The machine can fail during the execution of a job and when this happens, the job is lost and no further job can be carried out. The probability that the machine fails during the execution of job $j$ is given by $1-\pi_j$, where $\pi_j\in[0,1]$ is called the success probability of job $j$. If a certain subset $S$ of jobs is selected for processing, let $\sigma$ denote a sequence of these $|S|$ jobs, and let $\sigma(k)$ represent the $k$-th job in the sequence $\sigma$. The probability of reaching and successfully carrying out the $k$-th job in the sequence is given by 
\[
\prod_{i=1}^k \pi_{\sigma(i)},
\]
therefore the expected reward $R(S,\sigma)$ from selecting $S$ and sequencing its jobs according to $\sigma$ is
\begin{equation}\label{eq:exprofit}
R(S,\sigma)=\sum_{k=1}^{|S|}r_{\sigma(k)}\prod_{i=1}^k \pi_{\sigma(i)}.
\end{equation}
If we let $c(S)=\sum_{j\in S}c_j$ denote the total cost of selecting $S$, the \textit{expected net profit} of selecting $S$ and sequencing the jobs according to $\sigma$ is
\begin{equation}\label{eq:netprofit}
z(S,\sigma)=\sum_{k=1}^{|S|}r_{\sigma(k)}\prod_{i=1}^k \pi_{\sigma(i)}-c(S).
\end{equation}
We study the problem of selecting a subset $S\subseteq J$ of jobs and finding a sequence $\sigma$ so that the expected net profit $z(S,\sigma)$ is maximized. We call this problem the \emph{Unreliable Job Selection and Sequencing Problem (UJSSP)}.

When there are no costs ($c_j=0$ for all $j=1,\dots,n$), it is obviously profitable to select all jobs, and only the sequencing problem is left. In this case, we refer to the problem as the \emph{Unreliable Job Sequencing Problem (UJSP)}. This problem has been independently considered by various authors under different names and application contexts (see \cite{kadane1969}, \cite{stadje1995}, \cite{hellerstein1993}, and \cite{agnetis2009}) and is equivalent to the basic $n$-out-of-$n$ test sequencing problem, from which a considerable literature on testing problems originated (see \cite{unluyurt2025} for a recent overview). In this case, the optimal sequence is obtained by simply scheduling the jobs by non-increasing values of the following index \citep{kadane1969,mitten1960}:
\begin{equation}\label{eq:zrule}Z_j=\frac{\pi_jr_j}{1-\pi_j}.
\end{equation}
Consequently, in UJSSP, once a subset $S$ is selected, expression~\eqref{eq:netprofit} is maximized by sequencing the jobs in $S$ according to a schedule dictated by \eqref{eq:zrule}. Letting $\sigma_Z$ denote such sequence, in the following we let $R(S)=R(S,\sigma_Z)$ and $z(S)=z(S,\sigma_Z)$. We can restate UJSSP as the problem of finding $S^*$ such that:
\begin{equation*}
z(S^*)=\max_{S\subseteq J}\left\{R(S)-c(S)\right\}.
\end{equation*}
For the sake of simplicity, unless specified otherwise, we assume from now on that the $n$ jobs are numbered by non-increasing values of $Z_j$.

In this paper we present various results concerning UJSSP\@. In Section~\ref{sec:lit}, we discuss the most relevant research streams related to our model and briefly comment on practical contexts where such problems may arise. In Section~\ref{sec:compl}, we establish the complexity of UJSSP, showing that the problem is NP-hard in general through a reduction from the Product Partition Problem but polynomially solvable in some special cases. In Section~\ref{sec:milp} we present a compact mixed-integer linear programming (MILP) formulation of UJSSP\@. We introduce a dynamic programming (DP) algorithm in Section~\ref{sec:dp} and two novel stepwise exact algorithms in Section~\ref{sec:ex}. Detailed computational experiments for the algorithms are described in Section~\ref{sec:comp}, along with results and comments. Finally, some conclusions are drawn in Section~\ref{sec:conc}.

\section{Related work}\label{sec:lit}
Research on scheduling with machine unavailability has attracted increasing attention since the early 1990s (see \cite{schmidt2000} for an overview). Much of the literature considers settings where non-availability periods are known in advance and can be incorporated into the schedule. In many practical systems, however, machine failures are unpredictable, motivating the study of stochastic or online models in which breakdowns occur randomly. \cite{Glazebrook1984} proposed one of the first stochastic breakdown models on a single machine. He considered general cases in which the scheduling process is viewed as a Markov decision process with random job and machine breakdown durations. Policies based on Gittins’ index were used to decide which job to process at any time to maximize expected reward. More recent work on unexpected breakdowns on a single machine include \cite{huo2014}, who minimize total weighted completion time, and \cite{kacem2016}, who minimize either makespan or maximum lateness when jobs can have release and delivery times. 
In these models, breakdown intervals are often represented as \([s,t]\), where \(s\) and \(t\) are random variables following some probability distribution. In contrast, we assume that the probability of failure depends on the job being processed. This excludes failures caused by external preemptions but captures situations in which task characteristics (duration, complexity, resource intensity) influence reliability. Recent work in maintenance modeling supports this assumption: \cite{converso2023}, for instance, show that incorporating workload information improves failure-probability forecasts. Our model further assumes non-recoverable breakdowns: once the machine fails, it becomes unavailable for the remainder of the planning horizon. Permanent breakdown models are more appropriate for situations where the cost or time to repair is prohibitive and the resource cannot be restored within the decision horizon (for example for modelling catastrophic hardware failures). 

When machine breakdowns are assumed to be non-recoverable, the set of completed jobs becomes stochastic and may even be empty, making classical objectives such as makespan or total tardiness inappropriate. Instead, the goal is to maximize the expected revenue earned before failure. The problem of assigning \(n\) jobs with job-specific success probabilities and rewards to \(m\) machines under this objective was first studied by \cite{agnetis2009}, who showed its equivalence to the Total Weighted Discounted Completion Time Problem \citep{pinedo2002}, proved NP-hardness even for \(m=2\), and proposed a round-robin heuristic with a tight \(\frac{1}{2}\)-approximation ratio. Subsequent work on this problem examined list-scheduling algorithms for two and multiple machines \citep{agnetis2014, agnetis2020} while a variant in which the probability of failure of a job depends on its revenue and the machine it is processed on was examined by \cite{agnetis2017}. A similar problem in which a job has one replication for each machine and yields revenue if at least one execution succeeds was studied by \cite{agnetis2022,agnetis2025}. All these works focus on multiple-machine settings, since the single-machine case reduces to the classical $n$-out-of-$n$ test sequencing problem. Complexity arises in the multi-machine case due to job partitioning (for non-replicable jobs) or the fact that other sequences can become optimal (for replicable jobs). 

UJSSP, although single-machine, remains non-trivial because of the possibility to select jobs, with job-specific selection costs.
\cite{stadje1995} studied a closely related model in which a fixed number of jobs must be chosen, showing that a greedy algorithm is optimal in this case. Similar results for a more general model are found by \cite{chade2006}, who showed that a greedy algorithm is optimal for a more general selection problem with a downward-recursive revenue function and cardinality-dependent costs, encompassing both the equal-cost version of UJSSP and the problem of \cite{stadje1995}. \cite{olszewski2016} generalized their model further by allowing heterogeneous costs; they identified conditions under which the greedy algorithm still yields optimal solutions.



\section{Complexity}\label{sec:compl}
Before discussing the complexity of UJSSP in the general case, we discuss two special cases, namely with identical costs and with identical probabilities of success, which can both be solved by the greedy algorithm given by Algorithm~\ref{alg:ga}. This implies that both these special cases can be solved in $O(n^2)$, as at each step we simply need to ``try out" each of the remaining jobs and select one yielding the largest marginal expected gain. 

\begin{algorithm}
\caption{Greedy algorithm for UJSSP.} \label{alg:ga}
\begin{algorithmic}
\State $S_0 \gets \{\emptyset\}$;
\State $i \gets 0$;
\While{($i<n$ and) there is at least one item $j\not\in S_i$ such that $z(S_i\cup \{j\})>z(S_i)$}
\State Select the job $k \not\in S_i$ such that
    \[    z(S_i \cup \{k\}) = \max_{j \not\in S_i} \{z(S_i \cup \{j\})\};    \]
\State $S_{i + 1} \gets S_{i} \cup \{k\}$
\State $i \gets i+1$
\EndWhile
\State \Return $S_i$.
\end{algorithmic}
\label{algo1}
\end{algorithm}

\subsection{Identical costs}
Consider the special case of UJSSP where all jobs have the same cost, i.e., \(c_j = c\) for all \(j = 1,\dots,n\). This case is optimally solvable by a greedy algorithm, as implied by \cite{stadje1995} and \cite{chade2006}.

\cite{stadje1995} show that a variant of UJSSP with a fixed number of selected jobs admits a greedy optimum. Since the expected revenue function is submodular \citep{chade2006}, the marginal gain in expected reward decreases as the selected set grows. Therefore, in the equal-cost setting, no further job should be added once the marginal gain drops below the common cost \(c\).

\cite{chade2006} show that the greedy algorithm is optimal for a class of selection problems that includes UJSSP with equal costs. In particular, the \emph{College Selection Problem (CSP)}, introduced in their paper as an application, is equivalent to UJSSP\@. In CSP, a student selects a subset $S$ from a set $N$ of colleges ($|N|=n'$) to apply to. For each college $j$ there is a utility~$w_j$, a probability $\alpha_j$ of being accepted, and an application cost $c_j'$. The student applies to a number of colleges, and will pick one having the largest utility among those that admit her. If we number the colleges by non-increasing utilities, the probability that college $j$ is the one the student will actually go to when $S$ is selected, is given by \(\alpha_j\prod_{i<j, i\in S}(1-\alpha_i),\) and CSP consists in selecting a subset $S$ of colleges so as to maximize the expected utility, i.e., letting $c'(S)=\sum_{j\in S}c_j'$:
\begin{equation}\label{eq:college}
\max_{S\subseteq N}\left\{\sum_{j\in S}w_j\alpha_j\prod_{i<j, i\in S}(1-\alpha_i) -c'(S)\right\}.
\end{equation}
Given an instance of CSP, consider an instance of UJSSP with $n=n'$ jobs,  letting $r_j=w_j\alpha_j/(1-\alpha_j)$, $\pi_j=(1-\alpha_j)$ and $c_j=c_j'$. In this case \eqref{eq:netprofit} and \eqref{eq:college} coincide, and 
\[
Z_j=\frac{\pi_jr_j}{1-\pi_j}=w_j,
\]
so indeed the ordering based on \eqref{eq:zrule} of UJSSP coincides with the ordering by non-increasing utilities in CSP\@. Vice versa, one can transform an instance of UJSSP into an equivalent instance of CSP by letting $w_j=r_j\pi_j/(1-\pi_j)$, $\alpha_j=1-\pi_j$, and $c_j'=c_j$.
As shown by \cite{chade2006}, the CSP is optimally solved by a greedy algorithm when all application costs are equal. Consequently, UJSSP with equal costs admits the same optimal greedy solution.

\subsection{Identical probabilities}
In this section we consider the special case of UJSSP in which $\pi_j=\pi$ for all $j=1,\dots,n$. First of all, notice that in this case, letting $r_{[k]}$ denote the  reward of the job in position $k$, and denoting with $S$ the set of selected jobs, the expected reward can be written as: \(R(S)=\sum_{k=1}^{|S|} r_{[k]}\pi^k.\)
As a consequence, if a job $j$ is assigned to position $k$, its contribution to the net expected reward is given by
\begin{equation}\label{assign}
\max\{0,r_j\pi^k - c_j\},
\end{equation}
and for a fixed number $H$ of selected jobs, the optimal solution can be found by solving an assignment problem in which the reward $q_{jk}$ from assigning $j$ to position $k$ is given by \eqref{assign} if $k\leq H$, and 0 otherwise. Iterating for all values of $H$, we find the optimal solution of UJSSP.

A stronger result can be established by viewing UJSSP as a special case of the \emph{Simultaneous Selection Problem (SSP)}, introduced by  \cite{olszewski2016}. Consider a set $T$ of items, each having utility $u_i$, such that $u_1\geq u_2\geq\dots\geq u_n$, and a submodular non-decreasing set function~$f(\cdot)$. Then consider the following optimization problem: 
\begin{subequations}
	\begin{align}
	g(T)=\max &\sum_{i\in T}u_ix_i\label{eq:gt}\\
& \sum_{i\in S}x_i\leq f(S)\;\;\forall S\subseteq T\label{eq:sub}\\
& x_i\geq 0, i\in T\nonumber
	\end{align}	
\end{subequations}
This problem can be interpreted as the maximum utility that one can obtain by assigning weights~$x_i$ to the elements of $T$, given that for any subset $S$ of $T$ the sum of the weights does not exceed the submodular function $f(S)$. As $f(\cdot)$ is submodular, \eqref{eq:sub} is indeed a polymatroid and it is well known that $g(T)$ has the following optimal solution:
\begin{align}
    x_1^*&=f(\{1\})\label{eq:poly}\\    
    x_2^*&=f(\{1,2\})-f(\{1\})\nonumber\\
    \dots&\nonumber\\
    x_n^*&=f(\{1,2,3,\dots,n\})-f(\{1,2,3,\dots,n-1\}).\nonumber
\end{align} 
It is easy to check that UJSP (i.e., UJSSP without costs) is a special case of of this problem. In fact, given an instance of UJSP, in \eqref{eq:gt} let $u_j=Z_j$ and $f(S)=1-\prod_{j\in S}\pi_j$. From \eqref{eq:poly}, we get that the optimal solution is:
\begin{align}
    x_1^*&=1-\pi_1\nonumber\\    
    x_2^*&=(1-\pi_1\pi_2)-(1-\pi_1)=\pi_1(1-\pi_2)\nonumber\\
    \dots&\nonumber\\
    x_n^*&=(1-\pi_1\pi_2\pi_3\dots\pi_n)-(1-\pi_1\pi_2)=\pi_1\pi_2\dots\pi_{n-1}(1-\pi_n)\nonumber
\end{align} 
and hence \eqref{eq:gt} becomes 
\begin{align}
    g(T)=\sum_{i\in T}u_ix_i^*&=\frac{\pi_1r_1}{1-\pi_1}(1-\pi_1)\nonumber\\
    &+\frac{\pi_2r_2}{1-\pi_2}\pi_1(1-\pi_2)\nonumber\\    
    &+\dots\nonumber\\
    &+\frac{\pi_nr_n}{1-\pi_n}\pi_1\pi_2\dots\pi_{n-1}(1-\pi_n)\nonumber
\end{align}
which is identical to \eqref{eq:exprofit} (for $|S|=n$).
Let us now add the costs to the picture. Given a ground set $N$, and a cost $c_j$ for each variable $j=1,\dots,|N|$, for each $T \subseteq N$, let $z(T)=g(T)-c(T)$, where $g(T)$ is defined by \eqref{eq:gt} and $c(T)=\sum_{j\in T}c_j$.  The \textit{Simultaneous Selection Problem (SSP)} \citep{olszewski2016} consists in selecting a subset $T^*\subseteq N$ such that
\begin{equation*}
T^*=\arg\max_{T\subseteq N}\left\{z(T) \right\}.
\end{equation*}
From what is above, it is apparent that UJSSP is a special case of SSP.

The main result given in \cite{olszewski2016} is the following. 
\begin{theorem}\label{thm:down}
\citep{olszewski2016} Given an instance of SSP, if:
\begin{itemize}
    \item[$(i)$] $f(S)$ is downward recursive, and
    \item[$(ii)$] $f(S)$ only depends on the cardinality of $S$
\end{itemize}
then SSP is optimally solved by the greedy algorithm.
\end{theorem}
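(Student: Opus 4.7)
The plan is to leverage conditions (i) and (ii) sequentially: (ii) collapses the polymatroid form of $g(T)$ to a clean rank-weighted sum, and (i) yields a nesting of cardinality-constrained optima, from which greedy optimality follows.

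Under (ii), I would write $f(S)=\phi(|S|)$ with $\phi$ concave and non-decreasing (by submodularity and monotonicity of $f$). The polymatroid optimum \eqref{eq:poly} then reduces to weights $\delta_j := \phi(j)-\phi(j-1)$ assigned to the elements of $T$ in decreasing order of utility. Since $\delta_1\geq\delta_2\geq\dots\geq 0$ and the utilities $u_i$ are ordered by assumption, the rearrangement inequality gives the closed form
\begin{equation*}
g(T) \;=\; \sum_{j=1}^{|T|} u_{[j,T]}\,\delta_j,
\end{equation*}
where $u_{[j,T]}$ denotes the $j$-th largest utility in $T$. Hence $z(T)=g(T)-c(T)$ becomes an explicit function of the ordered-utility sequence of $T$ and the costs $c_i$.

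For each cardinality $k$, let $T_k^*$ maximize $z(T)$ over $|T|=k$. The central step is to use condition (i) to prove the nesting property $T_{k-1}^*\subseteq T_k^*$ for all $k\geq 1$. I would do this by an exchange argument: assuming items $a\in T_k^*\setminus T_{k-1}^*$ and $b\in T_{k-1}^*\setminus T_k^*$ exist, swapping $a$ for $b$ in $T_k^*$ cannot strictly improve $z$. Downward recursiveness of $f$ controls how the $\delta_j$ weights redistribute across ranks after the swap; it ensures that the sign of the $z$-change at level $k$ mirrors that of a corresponding swap at level $k-1$, contradicting the optimality of $T_{k-1}^*$. Once nesting holds, the greedy algorithm traces the chain $\emptyset\subseteq T_1^*\subseteq\dots\subseteq T_n^*$: at step $i$, the unique element distinguishing $T_{i+1}^*$ from $T_i^*$ is exactly the marginal-$z$ maximizer over $N\setminus T_i^*$, because the rank-weighted structure makes the greedy choice coincide with the chain extension. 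The algorithm's stopping rule halts at the optimal cardinality, since along the chain the marginal gain $z(T_k^*)-z(T_{k-1}^*)$ is monotonically non-increasing (by concavity of $\phi$ and the rearrangement structure) and thus changes sign at most once.

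The main obstacle is making the swap argument rigorous. Condition (ii) alone does not give nesting --- costs can cause lower-cardinality optima to include items absent from higher-cardinality ones --- so the downward-recursive hypothesis is essential for controlling the $\delta_j$-reassignment induced by a single-item swap. Formalizing this requires careful bookkeeping of how items shift rank within the set after the swap and how this bookkeeping interacts with the cost differential $c_a-c_b$; it is precisely the recursive decomposition afforded by (i) that allows this computation at level $k$ to be reduced to the analogous computation at level $k-1$.
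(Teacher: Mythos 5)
First, a point of reference: the paper does not prove this theorem itself --- it is quoted from \citet{olszewski2016}, so there is no internal proof to compare against. Judged on its own merits, your proposal has a sensible skeleton: under (ii) the polymatroid optimum \eqref{eq:poly} does collapse to $g(T)=\sum_{j} u_{[j,T]}\,\delta_j$ with $\delta_j=\phi(j)-\phi(j-1)$ non-negative and non-increasing (and under (i) in fact geometric, $\delta_j=\lambda\rho^{\,j-1}$), and the route ``nesting of cardinality-constrained optima $\Rightarrow$ greedy traces the chain $\Rightarrow$ concavity of $V(k)=z(T_k^*)$ justifies the stopping rule'' mirrors the analysis of the marginal improvement algorithm in \citet{chade2006}. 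But there is a genuine gap exactly where you flag it: the nesting lemma $T_{k-1}^*\subseteq T_k^*$ is asserted, not proved, and it is the entire content of the theorem. The single-swap exchange you sketch does not obviously work, for three reasons: (a) $T_k^*$ and $T_{k-1}^*$ may differ in more than one element, so a single swap $a\leftrightarrow b$ need not connect them; (b) inserting or removing one element shifts the rank, and hence the weight $\delta_j$, of every lower-ranked element, so the change in $g$ is not a local quantity whose sign can be transferred from level $k$ to level $k-1$ without explicit computation; and (c) it is precisely the interaction of these rank shifts with the heterogeneous costs that makes the statement delicate --- Examples \ref{ex:1} and \ref{ex:2} of the paper show nesting fails the moment condition (ii) is dropped. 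Saying that downward recursiveness ``controls the redistribution'' restates what must be shown rather than showing it.

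A secondary, repairable issue concerns the stopping rule. You attribute the monotonicity of $V(k)-V(k-1)$ to concavity of $\phi$, but these increments also contain the cost $c(e_k)$ of the element added at step $k$, which need not be monotone in $k$. The correct derivation, given nesting, is
\[
V(k-1)-V(k-2)\;\geq\; z(T_{k-2}^*\cup\{e_k\})-z(T_{k-2}^*)\;\geq\; z(T_{k-1}^*\cup\{e_k\})-z(T_{k-1}^*)\;=\;V(k)-V(k-1),
\]
where the first inequality uses optimality of $T_{k-1}^*$ and the second uses submodularity of the rank-weighted function $g$, which you would also need to verify. So the concavity step can be fixed, but only after the nesting lemma is actually established; as written, the proposal is an outline of the intended proof rather than a proof.
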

We note that a submodular function is \textit{downward recursive} if, for any two disjoint subsets $U$ and $V$ such that $\min_{j\in U}\{u_j\}\geq\max_{j\in V}\{u_j\}$, it holds
    \[
    f(U\cup V)=f(U)+f(V)\rho(U),
    \]
where $\rho(\cdot)$ is a multiplicative function (i.e., $\rho(U\cup V)=\rho(U)\rho(V)$) whose values belong to [0, 1).

We use Theorem~\ref{thm:down} to prove that UJSSP with identical probabilities can be solved by the greedy algorithm.
\begin{theorem}\label{thm:sspdr}
If all the jobs have the same success probability $\pi$,  UJSSP is solved by the greedy algorithm.
\end{theorem}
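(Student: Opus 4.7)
The plan is to apply Theorem~\ref{thm:down} directly, showing that in the identical-probability setting both conditions $(i)$ and $(ii)$ of the theorem are satisfied by the submodular function $f$ that casts UJSSP as an instance of SSP. Recall from the preceding discussion that UJSSP is an SSP instance in which $u_j=Z_j$ and $f(S)=1-\prod_{j\in S}\pi_j$.

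First I would handle the degenerate case $\pi=1$ separately: every selected job succeeds with certainty, so the optimal set is simply $\{j : r_j>c_j\}$, which the greedy algorithm trivially returns. For the remainder assume $\pi\in[0,1)$. Substituting $\pi_j=\pi$ into the expression for $f$ yields
\[
f(S)=1-\pi^{|S|},
\]
which depends only on $|S|$, giving condition $(ii)$ immediately.

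For condition $(i)$, I would take two disjoint subsets $U,V$ and guess the multiplicative factor $\rho(U)=\pi^{|U|}$. A short calculation shows
\[
f(U)+f(V)\rho(U)=(1-\pi^{|U|})+(1-\pi^{|V|})\pi^{|U|}=1-\pi^{|U|+|V|}=f(U\cup V),
\]
as required by the downward-recursive definition. The multiplicativity $\rho(U\cup V)=\pi^{|U|+|V|}=\rho(U)\rho(V)$ is immediate, and $\rho(U)=\pi^{|U|}\in[0,1)$ because $\pi\in[0,1)$ (with the convention $\rho(\emptyset)=1$, consistent with multiplicativity). Note that the hypothesis $\min_{j\in U}\{u_j\}\geq\max_{j\in V}\{u_j\}$ is not even needed here, since $f$ depends only on cardinality.

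With both conditions verified, Theorem~\ref{thm:down} applies and the greedy algorithm (Algorithm~\ref{alg:ga}) solves UJSSP with identical success probabilities optimally. The main (very mild) obstacle is simply guessing the correct $\rho$; once the form $\rho(U)=\pi^{|U|}$ is tried, everything reduces to a one-line algebraic identity, and there is no combinatorial content beyond invoking the Olszewski--Siegel theorem.
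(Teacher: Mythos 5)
Your proof is correct and takes essentially the same route as the paper's: both verify conditions $(i)$ and $(ii)$ of Theorem~\ref{thm:down} by exhibiting $f(S)=1-\pi^{|S|}$ (cardinality dependence) and the multiplicative factor $\rho(U)=\pi^{|U|}$ for downward recursiveness. Your separate handling of the degenerate case $\pi=1$ (needed so that $\rho$ genuinely takes values in $[0,1)$) is a small point of extra care that the paper's proof passes over silently, but it does not change the substance of the argument.
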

\begin{proof}
We only need to show that the two conditions $(i)$ and $(ii)$ of Theorem~\ref{thm:down} hold. 

$(i)$ From \eqref{eq:gt}--\eqref{eq:sub}, UJSSP can be  formulated as SSP, with $f(S)=1-\prod_{j\in S}\pi_j$. Consider the following function $\rho(S)=\prod_{j\in S}\pi_j$. Clearly, $\rho(S)$ is a multiplicative function, and given disjoint sets $U$ and $V$:
\[
f(U\cup V)=1-\prod_{i\in U\cup V}\pi_i=(1-\prod_{i\in U}\pi_i)+(\prod_{i\in U}\pi_i)(1-\prod_{i\in V}\pi_i)=f(U)+\rho(U)f(V),
\]
hence $f$ is downward recursive. 

$(ii)$ In the special case of UJSSP in which all the jobs have the same success probability $\pi$, for any subset $S$ one has \(f(S)=1-\pi^{|S|}\), hence $f(S)$ only depends on the cardinality of $S$. 
\end{proof}

\subsection{The general case}
SSP is NP-hard for a general submodular function $f$, even when $f(S)$ is of coverage-type \citep{feige2010}. However, we are not aware of results establishing the complexity of UJSSP\@. In the following, we address this issue by showing that UJSSP is in general NP-hard. In the proof we use the following problem, which is strongly NP-hard \citep{ng2010}.

\emph{Product Partition Problem (PPP)}. Given a set $N$ of $n$ positive integers $a_1,\dots,a_n$, is there a subset $S\subset N$ such that \(\prod_{j\in S}a_j=\prod_{j\in N\setminus S}a_j\)?
Note that we can assume $a_j\geq 2$ for all $j$.


\begin{theorem}\label{thm:comp}
UJSSP is NP-hard.
\end{theorem}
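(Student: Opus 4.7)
The plan is a polynomial-time Karp reduction from the Product Partition Problem (PPP), which the paper has just reminded us is strongly NP-hard. Given a PPP instance $a_1,\dots,a_n$ with product $A=\prod_{j\in N}a_j$, one may assume $a_j\ge 2$ for all $j$ and that $A$ is a perfect square, $A=L^2$, since otherwise the PPP answer is trivially ``no''.

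My first move is to exploit the multiplicative structure of the UJSSP reward by setting $\pi_j=1/a_j$, so that
\[
\prod_{j\in S}\pi_j \;=\; \frac{1}{\prod_{j\in S}a_j} \;=\; \frac{1}{P(S)},
\]
which directly maps the PPP product onto the reliability chain of UJSSP. To get rid of the sequencing degree of freedom, I would next equalize the indices of \eqref{eq:zrule} by choosing $r_j=L(a_j-1)$; then $Z_j=L$ for every~$j$, so every order of a selected subset is optimal. Plugging this into \eqref{eq:exprofit} and telescoping gives the clean identity
\[
R(S)\;=\;L\Bigl(1-\tfrac{1}{P(S)}\Bigr),
\]
and hence $R(S)$ depends on $S$ only through the product $P(S)$.

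The heart of the reduction is then to pick the selection costs $c_j$ (and possibly a small gadget of auxiliary jobs) and a threshold~$K$ so that
\[
\max_{S\subseteq N} z(S) \;\ge\; K
\qquad\Longleftrightarrow\qquad
\exists\, S\subseteq N \text{ with } P(S)=L,
\]
which is exactly the PPP question. The natural template is to make $K-z(S)$ behave like a ``distance'' penalty, non-negative on all $S$ and zero precisely when $P(S)=L$. Since $R(S)$ is strictly concave as a function of $P(S)$, one tries to offset it with a cost that has the right first-order behaviour at $P=L$; a natural first attempt is $c_j$ proportional to $(a_j-1)/a_j$, so that the marginal of $z$ at an incumbent product $P$ flips sign exactly at $P=L$. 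After that, I would verify: (i) any $S$ with $P(S)=L$ attains $z(S)=K$; (ii) any $S$ with $P(S)\ne L$ satisfies $z(S)<K$; and (iii) all numerical parameters produced by the construction are rationals of polynomial bit-length in the PPP encoding, so that the reduction runs in polynomial time.

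The main obstacle I expect is step (ii). Because $c(S)$ is additive in the indicators $\mathbf{1}[j\in S]$ while $R(S)$ depends only on the product $P(S)$, a naive first-order calibration at $P=L$ is not enough: subsets with $P(S)$ close to, but different from, $L$ can still beat the intended PPP subset, as one can already see on small hand-worked examples. Overcoming this will require either a nonlinear calibration of $c_j$ tailored to $a_j$, or an auxiliary gadget (for instance, an extra job whose $Z$-value places it last in the schedule and whose contribution penalises overshoot), together with a careful case analysis that bounds $z(S)$ uniformly away from $K$ whenever $P(S)\ne L$. Once this uniform gap is established, the equivalence above follows, and NP-hardness of UJSSP is immediate.
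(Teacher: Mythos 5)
Your setup coincides exactly with the paper's: $\pi_j=1/a_j$, $r_j=\sqrt{W}\,(a_j-1)$ so that all $Z_j$ are equal and $R(S)=\sqrt{W}\bigl(1-1/P(S)\bigr)$ depends on $S$ only through the product $P(S)$. But the proof stops precisely at the point where the real work happens, and the one idea that closes it is missing: choose $c_j=\ln a_j$. The logarithm is what reconciles the tension you correctly identify between an additive cost and a multiplicative reward, because then $c(S)=\sum_{j\in S}\ln a_j=\ln P(S)$ is itself a function of $P(S)$ alone. The whole objective collapses to the single-variable function $z(S)=\sqrt{W}-f(P(S))$ with $f(x)=\ln x+\sqrt{W}/x$, which is minimized uniquely at $x=\sqrt{W}$ with value $1+\ln\sqrt{W}$; taking the threshold $K=\sqrt{W}-1-\ln\sqrt{W}$ gives the desired equivalence with no case analysis and no gadget. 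Your tentative calibration $c_j\propto(a_j-1)/a_j$ does not have this property ($c(S)$ is then genuinely a function of which jobs are in $S$, not only of $P(S)$), so step (ii) of your plan would indeed fail for it, as you suspect; the ``nonlinear calibration tailored to $a_j$'' you hope for is exactly the logarithm.

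A second point your plan glosses over: once the costs are $\ln a_j$ they are irrational, so your claim in (iii) that all parameters are ``rationals of polynomial bit-length'' cannot be taken for granted. The paper devotes a paragraph to this: it suffices to encode each $\ln a_j$ with enough bits to distinguish $\ln W$ from $\ln(W+1)$, i.e.\ about $2\lceil\log_2 W\rceil$ bits, which keeps the whole instance of size $O(n^2\log_2 a_{\max})$ and hence polynomial in the PPP encoding. Note also that this is why the result is only ordinary NP-hardness despite PPP being strongly NP-hard: the construction uses $\sqrt{W}$, which is not polynomially bounded in $a_{\max}$. Your plan is the right skeleton, but without the logarithmic cost and the precision argument it is not yet a proof.
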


\begin{proof}
Given an instance of the PPP, we define an instance of UJSSP as follows, where we let $W=\prod_{j=1}^na_j$. There are $n$ jobs, such that:
\begin{equation}\label{eq:pij}
\pi_j:=\frac{1}{a_j} 
\end{equation}
\begin{equation}\label{eq:rj}
r_j:= \sqrt{W}\left(\frac{1-\pi_j}{\pi_j}\right)=\sqrt{W}(a_j-1)
\end{equation}
\begin{equation}\label{eq:cj}
c_j:=\ln{a_j}.
\end{equation}
We want to establish whether there is a subset of jobs that ensures a net expected reward of at least
\begin{equation}\label{target}
\sqrt{W}-1-\ln{\sqrt{W}}.
\end{equation}

From \eqref{eq:pij}--\eqref{eq:cj}, we observe that for any job $j$ one has 
$Z_j=\pi_jr_j/(1-\pi_j)=\sqrt{W}$, so the value of $Z_j$ is the same for all jobs. Hence, given $S$, the value of $R(S,\sigma)$ is indeed the same for any $\sigma$, given by:
\[
\pi_{\sigma(1)}r_{\sigma(1)}+\pi_{\sigma(1)}\pi_{\sigma(2)}r_{\sigma(2)}+\dots+\pi_{\sigma(1)}\pi_{\sigma(2)}\dots\pi_{\sigma(|S|)}r_{\sigma(|S|)}
\]
which, from \eqref{eq:rj} and after some algebra, yields (assume that $\prod_{j\in S}\pi_j=1$ if $S=\emptyset$):

\[
R(S)=\sqrt{W}(1-\prod_{j\in S}\pi_j).
\]
As a consequence, we can write the net expected profit as
\begin{equation}\label{zs}
z(S)=\sqrt{W}(1-\prod_{j\in S}\pi_j)-\sum_{j\in S}c_j.
\end{equation}

So, from \eqref{target} and \eqref{zs} we want to establish whether there is a subset $S$ of jobs such that \begin{equation}\label{target2}
\sum_{j\in S}c_j+\sqrt{W}\prod_{j\in S}\pi_j \leq 1+\ln{\sqrt{W}}.
\end{equation}

For any $S$, from \eqref{eq:pij} and \eqref{eq:cj} we can write the left-hand-side of \eqref{target2} as
\begin{equation*}
\ln{\prod_{j\in S}a_j}+\sqrt{W}\frac{1}{\prod_{j\in S}a_j}.
\end{equation*}
Now, the function
\[
f(x)=\ln{x}+\sqrt{W}\frac{1}{x}
\]
has a minimum for $x=\sqrt{W}$, and it is given by $1+\ln\sqrt{W}$. Hence, \eqref{target2} holds (at equality) if and only if there is a partition of the $n$ integers into two sets, each having product equal to $\sqrt{W}$.

We are left with showing that the reduction is polynomial. Denoting by $a_{\max}$ the largest integer appearing in the instance of PPP, the input size of such an instance is $O(n\log_2 a_{\max})$. The only issue is that the values of rewards and costs in the instance of UJSSP can be irrationals. Of course, precise encoding of an irrational would require an infinite number of bits, but we are only concerned with a precision level which allows distinguishing two different values of $\ln{\prod_{j\in S}a_j}$ for any $S$. As $\prod_{j\in S}a_j \leq W$, we only need enough bits to distinguish between $\ln W$ and $\ln (W+1)$. Since $\ln (W+1)-\ln W \geq \frac{1}{W}-\frac{1}{W^2}>\frac{1}{W^2}$ (for any $W\geq 3$), we need at most $2\lceil\log_2 W\rceil$ bits (corresponding to decimal digits) to distinguish between $\ln W$ and $\ln (W+1)$. On the other hand, as the smallest probability will be larger than $1/W$, each probability requires at most $\log_2W=O(n\log_2 a_{\max})$ bits each to be encoded, and the rewards at most $\log_2a_{\max}+1/2\log_2W=O(n\log_2 a_{\max})$ bits, as rewards are integer and each reward is smaller than $\sqrt{W}a_{\max}$. Finally, each cost can be encoded by means of $\log_2(\ln a_{\max})=O(\log_2a_{\max})$ bits for the integer part and, as recalled,  $2\lceil\log_2 W\rceil$ bits for the decimal part. Hence, as there are $n$ jobs, the total number of bits required to encode the instance of UJSSP is $O(n^2\log_2 a_{\max})$, which is polynomial in the input size of the instance of PPP. 
\end{proof}

Notice that even if PPP is strongly NP-hard, we only prove the ordinary NP-hardness of UJSSP, since in our proof we use the value $\sqrt{W}$, where $W$ is the product of all numbers of PPP, and such a product is not bounded by a polynomial in the maximum integer of PPP\@. In fact, the input of an instance of PPP does not include the number $\sqrt{W}$. 

We note here that the result in Theorem~\ref{thm:comp} implies that the SSP is NP-complete even when the submodular function $f(S)$ is downward recursive.

As the cases with identical costs or identical probabilities are solved by the greedy algorithm, one may be led to investigate if this is the case also when all rewards are identical. It turns out that the greedy algorithm may not provide the optimal solution in this case.

\begin{example}\label{ex:1}
Consider $n=3$ and the data in Table \ref{tab:controes}. In this example, all jobs have unit reward (hence the optimal sequence is by decreasing probabilities). At the first step, the greedy algorithm would select job 2, since $z({2})=\pi_2-c_2=0.1$ while $z({1})=\pi_1-c_1=0.09$ and $z({3})=\pi_3-c_3=0.09$. At the second step, the greedy algorithm adds job 1, as $z({1,2})=\pi_1+\pi_1\pi_2-c_1-c_2=0.103>z({2,3})=\pi_2+\pi_2\pi_3-c_2-c_3=0.1029$, and we see that  $\{1,2\}$ is indeed better than $\{2\}$. Subsequently adding also job 3 yields $z({1,2,3})=\pi_1+\pi_1\pi_2+\pi_1\pi_2\pi_3-c_1-c_2-c_3=0.0476<0.103$, so the greedy algorithm returns the set $\{1,2\}$. However, the optimal solution is $\{1,3\}$, with $z({1,3})=\pi_1+\pi_1\pi_3-c_1-c_3=0.113$.
\end{example}

\begin{table}[htbp]
\centering
\begin{tabular}{@{}c@{\quad}c@{\quad}c@{\quad}c@{}}
$j$ & $\pi_j$ & $c_j$ & $r_j$ \\ \hline
1 & 0.9 & 0.81 & 1\\
2 & 0.87 & 0.77 & 1\\
3 & 0.67 & 0.58 & 1 \\ \hline
\end{tabular}
\caption{Data for Example \ref{ex:1}.\label{tab:controes}}
\end{table}

The complexity of UJSSP in the special case of identical rewards is open. 

We give a partial result for the special case in which the product $\pi_jr_j$ (expected reward) is identical for all jobs $j$. In fact, the following lemma holds.
\begin{lemma} \label{lem:eqer}
Let \( J \) be a set of \( n \) jobs such that for all \( j \in J \): \( r_j \pi_j = k \), for some \( k > 0 \). A subset \( S = \{j_1, \ldots, j_{|S|}\} \subseteq J \), can only be optimal if:
\begin{enumerate} 
    \item The job with the lowest cost is included in \( S \), and
    \item For every \( i \in \{1, \ldots, |S|-1\} \), the job \( j = \arg\min_{j > j_i} c_j \) is also included in \( S \).
\end{enumerate}
\end{lemma}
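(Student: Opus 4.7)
The plan is to exploit the algebraic simplification induced by the hypothesis $r_j\pi_j = k$. Since $Z_j = \pi_j r_j/(1-\pi_j) = k/(1-\pi_j)$ is monotone in $\pi_j$, the $Z$-ordering coincides with the index ordering, and for $S = \{j_1 < j_2 < \cdots < j_m\}$ (writing $m=|S|$) each summand of \eqref{eq:exprofit} collapses as $r_{j_l}\prod_{i=1}^{l}\pi_{j_i} = k\prod_{i=1}^{l-1}\pi_{j_i}$, giving
\[
R(S) \;=\; k\sum_{l=1}^{m}\prod_{p=1}^{l-1}\pi_{j_p}.
\]
The structural feature that drives the whole proof is that this sum depends on $\pi_{j_1},\ldots,\pi_{j_{m-1}}$ only: the success probability of the last-scheduled job is irrelevant to $R(S)$.

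I would establish condition~2 first by a swap argument. Suppose $S$ is optimal and, for some $i\in\{1,\ldots,m-1\}$, the job $j^{*}:=\arg\min_{j>j_i}c_j$ is not in $S$. If $j^{*}>j_m$, replace $j_m$ by $j^{*}$: the first $m-1$ entries of the new set $S'$ in index order coincide with those of $S$, so $R(S')=R(S)$ by the observation above, while $c_{j^{*}}<c_{j_m}$ (since $j_m\in\{j>j_i\}$ but is not the minimizer) yields $z(S')>z(S)$. Otherwise there exists $l\in\{i,\ldots,m-1\}$ with $j_l<j^{*}<j_{l+1}$; replace $j_{l+1}$ by $j^{*}$, so that only the entry at position $l+1$ changes, from $\pi_{j_{l+1}}$ to $\pi_{j^{*}}\geq\pi_{j_{l+1}}$ (because $j^{*}<j_{l+1}$ and indices are ordered by non-increasing $\pi_j$). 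Every product $\prod_{p=1}^{q-1}\pi_{j'_p}$ with $q\geq l+2$ then weakly increases, so $R(S')\geq R(S)$, whereas $c(S')<c(S)$ again contradicts optimality.

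For condition~1, let $j^{\dagger}:=\arg\min_{j\in J}c_j$ and suppose $j^{\dagger}\notin S$. If $j^{\dagger}>j_1$, then $j^{\dagger}$ also realizes $\arg\min_{j>j_1}c_j$, so condition~2 applied with $i=1$ forces $j^{\dagger}\in S$, a contradiction (the trivial case $m=1$ is handled separately by the observation that $z(\{j\})=k-c_j$ is maximized by the lowest-cost job). If $j^{\dagger}<j_1$, swap $j_1$ with $j^{\dagger}$; the difference $R(S')-R(S) = k(\pi_{j^{\dagger}}-\pi_{j_1})\bigl(1+\pi_{j_2}+\cdots+\prod_{p=2}^{m-1}\pi_{j_p}\bigr)\geq 0$ by monotonicity of $\pi_j$ in $j$, and $c(S')<c(S)$, yielding another contradiction.

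The main obstacle is really just spotting the structural simplification at the outset; once the ``last-position is free'' observation is in hand, every step reduces to a mechanical exchange argument. The only delicate bookkeeping is verifying how a single replacement propagates through the telescoping products defining $R(S)$, and the only nuisance is handling ties in cost, which can be resolved by fixing any consistent tie-breaking rule in $\arg\min$.
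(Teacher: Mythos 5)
Your proof is correct and rests on the same key observation as the paper's: under $r_j\pi_j=k$ the revenue $R(S)$ does not depend on the success probability of the last-scheduled job, so a cheaper job can be exchanged in without losing revenue, contradicting optimality. The paper's proof is slightly leaner in that, for both conditions, it always replaces the \emph{last} element of $S$ with the cheaper job (placing it in the last position keeps the revenue exactly unchanged, and re-sorting into $Z$-order can only help), which avoids your case analysis on where $j^{*}$ or $j^{\dagger}$ falls in the index order and the accompanying appeal to the monotonicity of $\pi_j$ along the $Z$-ordering.
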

\begin{proof}
The expected profit of executing \( S \) in order is:
\[
z(S) = \sum_{i=1}^{|S|} \left( k \cdot \prod_{l=1}^{i-1} \pi_{j_l} \right) - \sum_{i=1}^{|S|} c_{j_i}.
\]
Suppose \( j_{\min} = \arg\min_{j \in J} c_j\) is not in \(S\). Let \( S' = (S \setminus \{j_{|S|}\}) \cup \{j_{\min}\} \). Then:
\[
z(S') \geq z(S) + (c_{j_{|S|}} - c_{j_{\min}}) > z(S),
\]
since replacing the last job in the sequence by \(j_{\min}\) and keeping the order the same does not influence the revenue component and increases the expected profit with \(c_{j_{|S|}} - c_{j_{\min}} > 0\), which contradicts optimality of \(S\).
Similarly, for any \( i \in \{1,\ldots,|S|-1\} \), let \( j = \arg\min_{j > j_i} c_j \). If \( j \notin S \), define \( S' \) by replacing the last element of \( S \) with \( j \). The profit again increases due to the cost reduction, contradicting optimality of \(S\).
\end{proof}
 
Lemma~\ref{lem:eqer} implies that in this special case, the greedy algorithm is optimal if $|S^*|=2$ or if $n\leq 3$. However, for \(n = 4\) we can already find an example for which the greedy algorithm does not find the optimal solution.
\begin{example}\label{ex:2}
Consider $n=4$ and the data in Table \ref{tab:controes2}. In this example, for all the jobs the product of reward and probability is equal to 80 (hence the optimal sequence is by decreasing probabilities). At the first step, the greedy algorithm would select job 3 since it has the lowest cost: $z({3}) = 80 - 8 = 72$. At the second step, the greedy algorithm adds job 2, as $z({2,3})=80+0.4\cdot80-24-8=80>z({1,3})=80 + 0.8\cdot80-57-8=79=z({3,4})=80+0.2 \cdot 80-8-9=79>z({3})$. In a third step, the greedy algorithm adds job 1 because $z({1,2,3})=80+0.8\cdot80+0.8\cdot0.4\cdot80-57-24-8 = 80.6>z({2,3}) > z({2,3,4})=80+0.4\cdot80+0.4\cdot0.2\cdot80-57-8-9=77.4$. The greedy algorithm stops here as adding job~4 to $\{1,2,3\}$ does not lead to an improvement. However, in the optimal solution we select $\{1,3,4\}$, leading to an objective function value of $z({1,3,4}) = 80 + 0.8\cdot80 + 0.8\cdot0.2\cdot80 - 57 - 8 - 9 = 82.8$.
\end{example}

\begin{table}[htbp]
\centering
\begin{tabular}{@{}c@{\quad}c@{\quad}c@{\quad}c@{}} 
$j$ & $\pi_j$ & $c_j$ & $r_j$ \\ \hline
1 & 0.8 & 57 & 100\\
2 & 0.4 & 24 & 200\\
3 & 0.2 & 8 & 400\\ 
 4 & 0.1 & 9 & 800 \\ \hline
\end{tabular}
\caption{Data for Example \ref{ex:2}.\label{tab:controes2}}
\end{table}

\section{A MILP formulation}\label{sec:milp}
Exploiting the ordering induced by \eqref{eq:zrule}, it is possible to give a compact MILP formulation for UJSSP\@. In fact, numbering the jobs by non-increasing values of $Z_j=\pi_jr_j/(1-\pi_j)$, we can exploit the Z-ordering to avoid using disjunctive constraints that are  typical of many compact formulations for scheduling problems.
\begin{subequations}
	\begin{align}
	\max & \sum_{j=1}^n (r_jP_j-c_jx_j) & \label{mip:fobj}\\
	& P_j \leq  \pi_jx_j & \forall j \in \{1,\dots,n\} \label{mip:initial'}\\	
    & P_j \leq \pi_j(P_i+1-x_i) & \forall i,j \in \{1,\dots,n\} : i<j \label{mip:po_order'}\\
	 &  x_{j} \in \{0,1\} & \forall j \in \{1,\dots,n\} \label{mip:x_bin'}\\
 &P_{j} \geq 0 & \forall\, j \in \{1,\dots,n\} \label{mip:P_real}
	\end{align}	
\end{subequations}
In this formulation, $x_j=1$ if job $j$ is selected. If $x_j=1$, constraints \eqref{mip:initial'} and \eqref{mip:po_order'} ensure that $P_j$ is the product of the probabilities of the jobs selected up to $j$ (included), whereas, if and only if job $j$ is not selected, $P_j=0$.  Note that all the constraints \eqref{mip:po_order'} containing $x_i$ become non-binding if $x_i=0$.

\section{An upper bound}\label{sec:ub}

In what follows, we develop an upper bound for the problem. Recall that the jobs are numbered in non-increasing order of $Z_j$. We denote by $\pi^{[s]}_{1:l}$ the $s$-th largest probability in $\{\pi_1,\pi_2,\dots,\pi_l\}$, $s=1,\dots,l$ (and we let $\pi^{[0]}_{1:l}=1$). Now consider a job $j$, and suppose that it is scheduled in position $k\leq j$. An upper bound on its contribution to the objective function is given by
\begin{equation}
\max\{0,\left(\Pi_{s=0}^{k-1}\pi^{[s]}_{1:j-1}\right)\pi_jr_j-c_j\}.
\end{equation}
We now define the following quantities:
\[
q_{jk}=\left\{\begin{array}{cc}
 \max\{0,\left(\Pi_{s=0}^{k-1}\pi^{[s]}_{1:j-1}\right)\pi_jr_j-c_j\}    &  \text{if  } k\leq j\\
  0   & \text{if  } k> j
\end{array}\right.
\]
An upper bound can be derived by solving the following assignment problem, in which $x_{jk}=1$ if job $j$ is assigned to position $k$:
\begin{subequations}
	\begin{align}
	& \text{max} & \sum_{j =1}^n\sum_{k=1}^n q_{jk} x_{jk}  \label{ass:obj}\\
	& & \sum_{k=1}^n x_{jk} & =  1 & & \text{$\forall\, j=1,\dots,n$} \label{ass:1}\\	&&\sum_{j=1}^n x_{jk} & =  1 & & \text{$\forall\, k=1,\dots,n$} \label{ass:2}\\
    &&x_{jk}&\geq 0& & \text{$\forall\, j=1,\dots,n, k=1,\dots,n$}
\end{align}\end{subequations}
This upper bound can be slightly refined. It can happen that a job $j$ is assigned to a position $k$, while a job $j'>j$ is assigned to position $k'<k$. We can make sure the Z-order is maintained by adding constraints \eqref{uncr:1} or constraints \eqref{uncr:2.1}, \eqref{uncr:2.2}, \eqref{uncr:2.3}, and \eqref{uncr:2.4}.
\begin{subequations}
    \begin{align}
    && x_{jk} + x_{j'k'} \leq 1 && \text{$\forall\, j,j',k,k' \in \{1,\dots,n\}: j < j', k' < k$} \label{uncr:1} \\
   && \sum_{j' < j} \sum_{k' > k} x_{j'k'} \leq M_{1jk} \cdot \delta_{1jk} && \forall j \in \{2,\ldots,n\}, k \in \{1,\ldots,n-1\} \label{uncr:2.1} \\
   && \sum_{j' \geq j} \sum_{k' \leq k} x_{j'k'} \leq M_{2jk} \cdot \delta_{2jk} && \forall j \in \{2,\ldots,n\}, k \in \{1,\ldots,n-1\} \label{uncr:2.2} \\
   && \delta_{1jk} + \delta_{2jk} \leq 1 && \forall j \in \{2,\ldots,n\}, k \in \{1,\ldots,n-1\} \label{uncr:2.3} \\
   && \delta_{1jk}, \delta_{2jk} \in \{0,1\} && \forall j \in \{2,\ldots,n\}, k \in \{1,\ldots,n-1\} \label{uncr:2.4}
\end{align}
\end{subequations}
with \(M_{1jk} = \min \{n-k,j\}\) and \(M_{2jk} = \min \{k,n-j\}\).

\section{Dynamic programming} \label{sec:dp}
If all costs $c_j$ are integer, we can solve UJSSP using a backward dynamic programming (DP) algorithm. Let the state be defined by a pair $(b,i)$, where $b$ denotes the remaining budget and $i$ denotes that jobs from the set $\{i, \ldots, n\}$ can be selected. 
The value function $g(b,i)$ represents the maximum attainable expected revenue under these constraints.  
The recurrence is given by
\begin{equation*}
g(b,n) =
\begin{cases}
0 & \text{if } b \in \{0, \ldots, c_n - 1\}, \\
\pi_n \cdot r_n & \text{if } b \in \{c_n, \ldots, \sum_{j \in J} c_j\},
\end{cases}
\end{equation*}
and for $i < n$:
\begin{equation*}
g(b,i) = 
\begin{cases}
g(b,i+1) & \text{if } b \in \{0, \ldots, c_i - 1\}, \\
\max \left\{ g(b,i+1), \ \pi_i \cdot \big( r_i + g(b - c_i, i+1) \big) \right\} & \text{if } b \in \{c_i, \ldots, \sum_{j \in J} c_j\}.
\end{cases}
\end{equation*}

Intuitively, at each decision point we choose between skipping item $i$ (thus retaining budget $b$ and moving to the next index) 
or selecting item $i$ (which consumes $c_i$ units of budget, yields reward $\pi_i \cdot r_i$, and continues optimally with the remaining budget).  

The algorithm proceeds by initializing all values $g(b,n)$ for each budget $b$, 
and then iteratively computing values for 
$g(b,n-1), g(b,n-2), \ldots, g(b,1)$. 
Since computations for different budgets are independent given a certain set of jobs $\{i, \ldots, n\}$, each of these computations could, in principle, be calculated in parallel.  

Finally, the optimal solution is obtained by evaluating
\begin{equation*}
\max_{b} \, \{ g(b,1) - b \},
\end{equation*}
which selects the budget level that maximizes the objective function value.  
The algorithm runs in $O(n \cdot \sum_{j\in J} c_j )$ time, which can be upper-bounded by $O(n^2 \cdot c_{\max})$. Because the DP algorithm runs in pseudopolynomial time, we can refine the complexity classification of the problem: the version with integer costs is weakly NP-hard. This refinement does not extend to instances with fractional costs.

\section{Stepwise exact algorithms}\label{sec:ex}

We present two novel exact algorithms for solving UJSSP, which proceed stepwise through the jobs, either forwards or backwards. While they share several common ideas, their details differ.

\subsection{Foundations of the stepwise exact algorithms}
As mentioned before, we assume that the jobs are indexed in non-decreasing order of $Z_j$. For any set \( S \subseteq J \) and any \( j \in \{1, \ldots, n\} \), we define:
\[
\begin{aligned}
S_{<j} &= S \cap \{1, \ldots, j-1\}, &\quad S_{\leq j} &= S \cap \{1, \ldots, j\}, \\
S_{>j} &= S \cap \{j+1, \ldots, n\}, &\quad S_{\geq j} &= S \cap \{j, \ldots, n\}.
\end{aligned}
\]
In particular, $J_{<j} = \{1, \ldots, j-1\}$, $J_{\leq j} = \{1, \ldots, j\}$, $J_{>j} = \{j+1, \ldots, n\}$, and $J_{\geq j} = \{j, \ldots, n\}$. Let $S^* \subseteq J$ denote an optimal solution to UJSSP; then \( S^*_{<j} \), \( S^*_{\leq j} \), \( S^*_{>j} \), and \( S^*_{\geq j} \) are defined analogously.

\paragraph{Optimality conditions}
For any $S \subseteq J$ and any index $j$, the objective function can be rewritten as:
\[
R(S_{\leq j}) + \left( \prod_{i \in S_{\leq j}} \pi_i \right) \cdot R(S_{>j}) - c(S_{\leq j}) - c(S_{>j}).
\]
An optimal solution \( S^* \) maximizes this expression over all subsets \(S\) of \(J\). Therefore, if \( S^*_{>j} \) is given, then an optimal subset \( S^*_{\leq j} \subseteq J_{\leq j} \) must maximize:
\[
R(S_{\leq j}) + \left( \prod_{i \in S_{\leq j}} \pi_i \right) \cdot R(S^*_{>j}) - c(S_{\leq j})
\]
over all subsets \( S_{\leq j} \subseteq J_{\leq j} \), where \( c(S^*_{>j}) \) is constant (for any fixed $S^*_{>j}$) and can be ignored when comparing subsets. Similarly, if \( S^*_{\leq j} \) is given, then the optimal following jobs \( S^*_{>j} \subseteq J_{>j} \) must maximize:
\[
\left( \prod_{i \in S^*_{\leq j}} \pi_i \right) \cdot R(S_{>j}) - c(S_{>j})
\]
over all \( S_{>j} \subseteq J_{>j} \), now ignoring the constant terms \( R(S^*_{\leq j}) \) and \( c(S^*_{\leq j}) \).

\paragraph{Parameterizing the unknowns}
In practice, we do not know \( R(S^*_{>j}) \) or \( \prod_{i \in S^*_{<j}} \pi_i \). We therefore treat these quantities as unknown variables in the algorithm. Let $
r$ stand for $R(S^*_{>j})$ and $p$ represent $\prod_{i \in S^*_{<j}} \pi_i$. Define the following parameterized profit functions:
\[
F(S_{\leq j}, r) := R(S_{\leq j}) - \sum_{i \in S_{\leq j}} c_i + \left( \prod_{i \in S_{\leq j}} \pi_i \right) \cdot r,
\]
\[
B(S_{\geq j}, p) := - \sum_{i \in S_{\geq j}} c_i + R(S_{\geq j}) \cdot p.
\]
Then $S_{\leq j}$ can only equal $S^*_{\leq j}$ if there exists a feasible value \( r \) such that \( F(S_{\leq j}, r) \geq F(S'_{\leq j}, r) \) for all \( S'_{\leq j} \subseteq J_{\leq j} \) and \( S_{\geq j} \) can only equal \( S^*_{\geq j} \) if there exists a feasible value \( p \) such that \( B(S_{\geq j}, p) \geq B(S'_{\geq j}, p) \) for all \( S'_{\geq j} \subseteq J_{\geq j} \).

\paragraph{Upper envelope computation}
Both \( F(\cdot; r) \) and \( B(\cdot; p) \) are affine in the unknown parameter \( r \) or \( p \), respectively. Hence, finding all candidate subsets reduces to computing the upper envelope of a family of affine functions. Finding the upper envelope of lines of the form \( y = a_i \cdot x + b_i \) is equivalent to computing the lower convex hull of the points \( (a_i, -b_i) \), as noted by \cite{berg2000}. Finding the convex hull of a set of points can be done efficiently (see, amongst others, \cite{graham1972}, \cite{andrew1979}, \cite{jarvis1973}) even if points are added sequentially \citep{preparata1979}.

\paragraph{Bounding the unknowns}
Although \( r \) and \( p \) are not known precisely, their feasible values can be bounded:
\begin{itemize}
    \item \( 0 \leq r = R(S^*_{>j}) \leq R(J_{>j}) \), with $r=0$ at \( j = n \),
    \item \( \prod_{i<j} \pi_i \leq p = \prod_{i \in S^*_{<j}} \pi_i \leq 1 \), with $p=1$ at \( j = 1 \).
\end{itemize}

\paragraph{Pruning conditions}
If a subset $S_{\leq j}$ is suboptimal (i.e., cannot equal $S^*_{\leq j}$ for any feasible $r$), then any of its extensions—such as $S_{\leq j} \cup \{j+1\}$—can also be pruned. Similarly, suboptimal $S_{\geq j}$ cannot be extended into $S^*_{\geq j-1}$.

\paragraph{Implications}
These structural properties form the foundation of our exact algorithms. They enable efficient pruning strategies that substantially reduce the search space.
Although our approach incrementally builds an optimal solution by evaluating partial subsets, it cannot be formally classified as a DP method. This is because the subproblems we consider are not independent in the sense required by DP: the fact that a subset \(S \subseteq J_{\leq j}\) is optimal if the total job set were \(J_{\leq j}\) does not guarantee that it can be extended to form a globally optimal solution \(S^*\), because the probabilities of success of a predecessor set are multiplied with the revenue but not with the cost of a successor set. Similarly, our approach does not fully conform to a traditional branch-and-bound framework. While each node in our search corresponds to a binary decision, namely adding or excluding a job, and certain nodes are pruned based on their maximum achievable objective function value, the algorithm does not bound the objective function value in a traditional way and does not allow for a typical depth-first or best-first exploration strategy. Nonetheless, the framework could be adapted to a more conventional branch-and-bound scheme by defining the upper bound at every node as the highest attainable value of the profit function (i.e., taking the maximum value for \(r\) or \(p\) into account) and the lower bound as the corresponding minimum. Such an adaptation would permit depth-first or best-first search, but would not exploit the full information embedded in the profit functions.

\subsection{Forward stepwise exact algorithm}
In the forward exact solution method, we construct an optimal set \( S^* \) incrementally, starting from \( S^*_{\leq 1} \) and progressing toward \( S^*_{\leq n} = S^* \). At each step we try to eliminate subsets from \(J_{\leq j}\) as candidates for \( S^*_{\leq j} \). In doing so, we take advantage of the fact that even if $R(S^*_{> j})$ is unknown, it is upper bounded by $R(J_{> j})$, which can be easily computed, and decreases as \(j\) grows.

Let \(\mathcal{S}^{c}_{\leq j}\) be the set of candidates for \(S^*_{\leq j}\) before evaluating the profit function \(F(S,r)\) and let \(\mathcal{S}^{c^*}_{\leq j}\) denote the set of remaining candidates after taking \(F(S,r)\) into account. The set \(\mathcal{S}^{c}_{\leq j+1}\) of candidates for \(S^*_{\leq j+1}\) before the evaluation of their profit functions is obtained by duplicating each set in \(\mathcal{S}^{c^*}_{\leq j}\) and by appending job $j+1$ to the duplicate set so that \(|\mathcal{S}^{c}_{\leq j+1}| = 2 \cdot |\mathcal{S}^{c^*}_{\leq j}|\). However, the comparison among profit functions allows a significant reduction in the number of candidates, so that the size of sets \(\mathcal{S}^{c^*}_{\leq j}\) remains viable from a computational viewpoint. When the last job $n$ is reached, a simple comparison among the surviving sets points out the optimal solution.

The pseudocode for the forward exact algorithm is presented in Algorithm~\ref{alg:exf}. At each step, the set  \(S^{c^*}_{\leq j}\) is built by including nondominated sets of \(S^{c}_{\leq j}\) one by one. A detailed example illustrating the execution of the algorithm is provided below, using the data given in Table \ref{tab:exexemp}.

\begin{algorithm}[htbp]
    \caption{Forward stepwise exact algorithm} \label{alg:exf}
    \begin{algorithmic}
        \State $\bar{r} \gets R(J)$ 
        \State $\mathcal{S}^{c^*}_{\leq 0} \gets \{\emptyset\}$
        \For{$j \in \{1,\ldots,n\}$}
            \State $\mathcal{S}^{c}_{\leq j} \gets \mathcal{S}^{c^*}_{\leq j-1}$ 
            \For{$S \in \mathcal{S}^{c^*}_{\leq j-1}$}
            \If{$F(S \cup \{j\},r)$ does not coincide with some $F(S',r)$, $S' \in \mathcal{S}^{c}_{\leq j}$}
                \State $\mathcal{S}^{c}_{\leq j} \gets \mathcal{S}^{c}_{\leq j} \cup \{S \cup \{j\}\}$ 
            \EndIf
            \EndFor
            \State $\bar{r} \gets \frac{\bar{r} - \pi_j \cdot r_j}{\pi_j}$
            \State $\mathcal{S}^{c^*}_{\leq j} \gets \emptyset $
            \For{$S \in \mathcal{S}^{c}_{\leq j}$}
                \If{$\exists r \in [0, \bar{r}] \mid \forall S' \in \mathcal{S}^{c}_{\leq j} \setminus S: F(S,r) > F(S',r)$}
                    \State $\mathcal{S}^{c^*}_{\leq j} \gets \mathcal{S}^{c^*}_{\leq j} \cup \{ S \}$
                \EndIf
            \EndFor
        \EndFor
        \State \Return $\mathcal{S}^{c^*}_{\leq n}$ (optimal solution)
    \end{algorithmic}
\end{algorithm}

\begin{table}[htbp]
\centering
\begin{tabular}{@{}c@{\quad}c@{\quad}c@{\quad}c@{\quad}c@{}}
$j$ & $\pi_j$ & $c_j$ & $r_j$ & $Z_j$ \\ \hline 
1 & 0.75 & 75 & 250 & 750 \\
    2 & 0.5  & 150 & 500 & 500 \\
    3 & 0.5  & 70 & 350 & 350 \\
    4 & 0.6  & 30 & 100 & 150 \\ \hline
\end{tabular}
\caption{Input data for exact stepwise algorithms.\label{tab:exexemp}}
\end{table}

\paragraph{Step 1 (\( j = 1 \))}
The candidate subsets at this stage are \(\mathcal{S}^{c}_{\leq 1} = \{ \emptyset , \{1\} \}\). Their associated profit functions are:
\begin{align*}
F(\emptyset, r) &= r \\
F(\{1\}, r) &= 250 \cdot 0.75 - 75 + 0.75 \cdot r = 112.5 + 0.75 \cdot r
\end{align*}
To compare these two sets, we compute the upper bound on the optimal revenue for $S^*_{>1}$, i.e., $R(\{2,3,4\})$:
\[
0.5 \cdot 500 + 0.25 \cdot 350 + 0.15 \cdot 100 = 250 + 87.5 + 15 = 352.5.
\]
As shown in Figure~\ref{fig:forward-step1}, the function \(F(\{1\},r)\) lies strictly above \(F(\emptyset,r)\) for all feasible values of~\(r\). Therefore, \(\{1\}\) dominates \(\emptyset\), and we conclude that job 1 must be included in any optimal set. Hence, the reduced candidate set becomes \(\mathcal{S}^{c^*}_{\leq 1} = \{1\}\).

 \begin{figure}[htbp]
 \centering
 \includegraphics{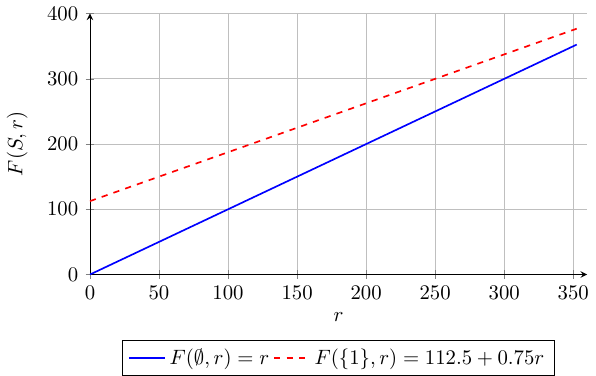}
 \caption{Comparison of sets to get \( \mathcal{S}^{c^*}_{\leq 1} \). \label{fig:forward-step1}}
\end{figure}

\paragraph{Step 2 (\( j = 2 \)).}
Since job 1 is guaranteed to be part of the optimal solution, we only consider subsets that include it: \( \mathcal{S}^{c}_{\leq 2} = \{ \{1\}, \{1,2\} \} \). The corresponding profit functions are:
\begin{align*}
F(\{1\}, r) &= 112.5 + 0.75 \cdot r \\
F(\{1,2\}, r) &= 250 \cdot 0.75 + 500 \cdot 0.75 \cdot 0.5 - (75 + 150) + (0.75 \cdot 0.5) \cdot r \\
&=150 + 0.375 \cdot r
\end{align*}
To evaluate which set yields a higher expected profit, we again compute the upper bound on the remaining revenue:
\[
0.5 \cdot 350 + 0.3 \cdot 100 = 175 + 30 = 205.
\]
Now, when we compare the two profit functions, we get:
\[
112.5 + 0.75 \cdot r > 150 + 0.375 \cdot r \iff r > 100.
\]
Therefore \( \{1,2\} \) dominates \( \{1\} \) if \( r < 100 \) while \( \{1\} \) dominates \( \{1,2\} \)  if \(r > 100\), as illustrated in Figure~\ref{fig:forward-step2}. Hence, both sets must be retained as candidates for \(S^*_{\leq 2}\), so \(\mathcal{S}^{c^*}_{\leq 2} = \{\{1\},\{1,2\}\}\).

\begin{figure}[htbp]
\centering
\includegraphics{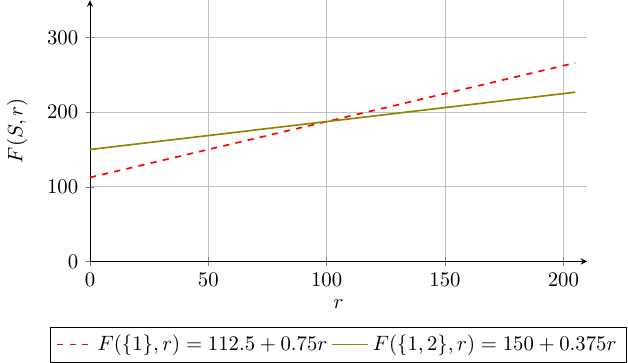} 
\caption{Comparison of sets to get \( \mathcal{S}^{c^*}_{\leq 2} \). \label{fig:forward-step2}}{}
\end{figure}

\paragraph{Step 3 (\( j = 3 \)).}
Again, to construct \(\mathcal{S}^{c}_{\leq 3}\) we take all sets in \(\mathcal{S}^{c^*}_{\leq 2}\) and their duplicates with the addition of job 3: \(\mathcal{S}^{c}_{\leq 3} = \{ \{1\}, \{1,2\}, \{1,3\}, \{1,2,3\} \}\).  The profit functions are as follows (\(F(\{1\},r)\) and \(F(\{1,2\},r)\) were derived in Step 2).

\begin{align*}
F(\{1,3\},r) &= (250 \cdot 0.75 + 350 \cdot 0.75 \cdot 0.5) - (75 + 70) + 0.75 \cdot 0.5 \cdot r \\
&= 173.75 + 0.375 \cdot r \\
F(\{1,2,3\},r) &= (250 \cdot 0.75 + 500 \cdot 0.75 \cdot 0.5 + 350 \cdot 0.75 \cdot 0.5 \cdot 0.5)\\
&\quad- (75 + 150 + 70) + 0.75 \cdot 0.5 \cdot 0.5 \cdot r \\
&= 145.625 + 0.1875 \cdot r
\end{align*}

The upper bound on the remaining revenue is \( 0.6 \cdot 100 = 60 \). As shown in Figure~\ref{fig:forward-step3}, for all feasible values of \(r \in [0,60]\) the set \(\{1,3\}\) achieves the highest profit. Therefore, we retain only this set:
\[
\mathcal{S}^{c^*}_{\leq 3} = \{ \{1,3\} \} = \{ S^*_{\leq 3} \}.
\]

\begin{figure}[htbp]
\centering
 \includegraphics{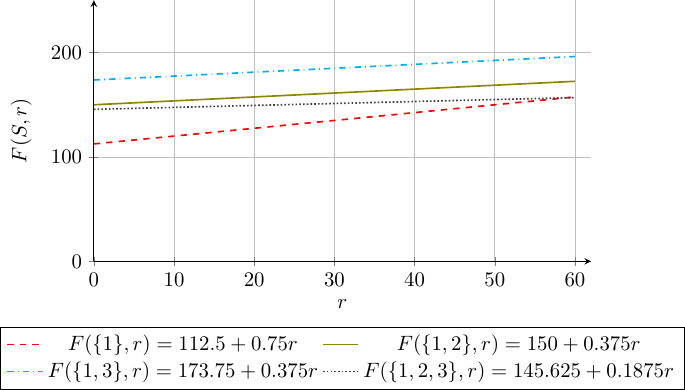} 
 \caption{Comparison of sets to get \( \mathcal{S}^{c^*}_{\leq 3} \). \label{fig:forward-step3}}
\end{figure}

\paragraph{Step 4 (\( j = 4 \)).}
At this point, $\mathcal{S}^{c}_{\leq 4}$ contains two candidate sets for \(S_{\leq 4}^*=S^*\) before comparing profit functions, namely \( \{1,3\} \) and \( \{1,3,4\} \). Since the remaining revenue is zero (\( R(S^*_{>4}) = 0 \)), we can simply evaluate the objective function values for both sets.
\begin{align*}
z(\{1,3\}) &= F(\{1,3\},R=0) = 173.75 \\
z(\{1,3,4\}) &= F(\{1,3,4\},R=0) \\
&= 250 \cdot 0.75 + 350 \cdot 0.75 \cdot 0.5 + 100 \cdot 0.75 \cdot 0.5 \cdot 0.6 - (75 + 70 + 30)\\
&= 166.25.
\end{align*}
Since \(z(\{1,3\}) > z(\{1,3,4\})\) we conclude that the optimal solution is \( S^* = \{1,3\} \). Notice that to obtain this solution, only six subsets needed to be evaluated:
\[
\emptyset, \{1\}, \{1,2\}, \{1,3\}, \{1,2,3\},\text{ and } \{1,3,4\},
\]
which is a significant reduction compared to the full set of \( 2^4 = 16 \) possible subsets of \( J \).

\subsection{Backward stepwise exact algorithm}
The backward algorithm follows a symmetric logic to the forward algorithm, but  
it starts from \( S^*_{\geq n} \) and moves backwards toward \( S^*_{\geq 1} = S^* \). At each step, we exclude subsets from \(J_{\geq j}\) as candidates for \( S^*_{\geq j} \) based on the profit function $B(\cdot,p)$, which depends on the joint probability of success for jobs scheduled \textit{before} job $j$, i.e., the earliest job in the set, and which can be lower-bounded by the product of all probabilities in $J_{<j}$, which increases as we move backwards through the job sequence. Let \(\mathcal{S}^{c}_{\geq j}\) be a set of candidate sets for \(S^*_{\geq j}\). If, for some $S\in \mathcal{S}^{c}_{\geq j}$, one has that for each $p \geq \prod_{i \in J_{< j}} \pi_i$, value $B(S,p)$ is not larger than 
all other profit functions in $\mathcal{S}^{c}_{\geq j}$, then $S$ can be discarded from 
$S^*_{\geq j}$.

Let $\mathcal{S}^{c}_{\geq j}$ be the set of candidates for $S^*_{\geq j}$ before taking \(B(S,p)\) into account and let \(\mathcal{S}^{c^*}_{\geq j}\) be the filtered set after suboptimal subsets are eliminated using the profit function \(B(S,p)\). The set $\mathcal{S}^{c}_{\geq j-1}$ is obtained from \(\mathcal{S}^{c^*}_{\geq j}\) by duplicating its members and adding job $j-1$ to each duplicate set. When job 1 is finally reached, a simple comparison among the surviving sets gives the optimal solution. The pseudocode for the backward exact algorithm is given in Algorithm~\ref{alg:exb}. The steps for each job~\(j\) using the data in Table~\ref{tab:exexemp} are worked out below as illustration.

\begin{algorithm}[htbp]
    \caption{Backward stepwise exact algorithm} \label{alg:exb}
    \begin{algorithmic}
        \State $\ubar{p} \gets \prod_{j \in J} \pi_j$ 
        \State $\mathcal{S}^{c^*}_{\geq n + 1} \gets \{\emptyset\}$
        \For{$j \in \{n,\ldots,1\}$}
            \State $\mathcal{S}^{c}_{\geq j} \gets \mathcal{S}^{c^*}_{\geq j+1}$
            \For{$S \in \mathcal{S}^{c^*}_{\geq j+1}$}
            \If{$B(S \cup \{j\},p)$ does not coincide with some $B(S',p)$, $S' \in \mathcal{S}^{c}_{\geq j}$}
                \State $\mathcal{S}^{c}_{\geq j} \gets \mathcal{S}^{c}_{\geq j} \cup \{S \cup \{j\}\}$
            \EndIf
            \EndFor
            \State $\ubar{p} \gets \frac{\ubar{p}}{\pi_j}$ 
            \State $\mathcal{S}^{c^*}_{\geq j} \gets \emptyset$
            \For{$S \in \mathcal{S}^{c}_{\geq j}$}
                \If{$\exists p \in [\ubar{p}, 1] \mid \forall S' \in \mathcal{S}^{c}_{\geq j} \setminus S : B(S,p) > B(S',p)$}
                    \State $\mathcal{S}^{c^*}_{\geq j} \gets \mathcal{S}^{c^*}_{\geq j} \cup \{ S \}$
                \EndIf
            \EndFor
        \EndFor
        \State \Return $\mathcal{S}^{c^*}_{\geq 1}$ (optimal solution)
    \end{algorithmic}
\end{algorithm}

\paragraph{Step 1 (\( j = 4 \)).}
We start with the candidate subsets \(S^{c}_{\geq 4} = \{\emptyset,\{4\}\}\). The corresponding profit functions are:
\begin{align*}
B(\emptyset, p) &= 0 \\
B(\{4\}, p) &= - 30 + p \cdot 0.6 \cdot 100 = - 30 + p \cdot 60.
\end{align*}
The lower bound on the joint success probability of earlier jobs is given by:
\[
\prod_{i \in J_{<4}} \pi_i=0.75 \cdot 0.5 \cdot 0.5 = 0.1875.
\]
To determine when \(\{4\}\) yields higher profits than \(\emptyset\), we consider:
\[
-30 + p \cdot 60 > 0 \iff p > 0.5.
\]
Since \(p \in [0.1875,1]\), both options need to be retained as candidates for \(S^*_{\geq 4}\) as the set \( \{4\} \) is better for \( p > 0.5 \) and \( \emptyset \) is better for \(p < 0.5\). 

 \begin{figure}[htbp]
 \centering
\includegraphics{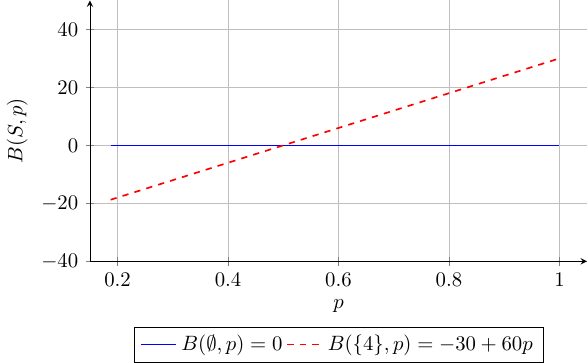}
\caption{Comparison of sets to get \( S^{c^*}_{\geq 4} \). \label{fig:backward-step1}}
\end{figure}

\paragraph{Step 2 (\( j = 3 \)).}
To get to the new candidate set we consider the sets in \(S^{c^*}_{\geq 4}\) and their extensions with job 3: \( S^{c}_{\geq 3} = \{ \emptyset, \{4\}, \{3\}, \{3,4\} \} \). We compute the profit functions for the sets involving job~3 (the others have been computed in the previous step):
\begin{align*}
B(\{3\}, p) &= -70 + 0.5 \cdot 350 \cdot p\\
&= - 70 + 175 \cdot p \\
B(\{3,4\}, p) &= - 70 - 30 + (0.5 \cdot 350 + 0.5 \cdot 0.6 \cdot 100) \cdot p  \\
&= -100 + 205 \cdot p
\end{align*}
The updated lower bound on the joint success probability before job 3 is \(0.75 \cdot 0.5 = 0.375 \). We now analyze the dominance relationships between the candidate sets, as visualized in Figure~\ref{fig:backward-step2}. We see that \(\{3\}\) dominates both \(\{4\}\) and \(\{3,4\}\) for all feasible values of $p$, with equality at \(p=1\) for \(\{3,4\}\). The empty set outperforms \(\{3\}\) for \(p < 0.4\). In conclusion, \(S^{c^*}_{\geq 3} = \{ \emptyset, \{3\} \} \). 

 \begin{figure}[htbp]
 \centering
\includegraphics{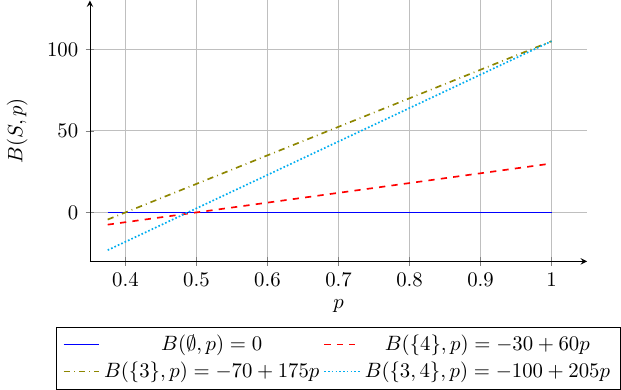}
\caption{Comparison of sets to get \( S^{c^*}_{\geq 3} \). \label{fig:backward-step2}}
\end{figure}

\paragraph{Step 3 (\( j = 2 \)).}
In order to construct \(S^{c}_{\geq 2}\), we consider the sets in $S^{c^*}_{\geq 3}$ and their extensions with job 2. This gives \(S^{c}_{\geq 2} = \{ \emptyset, \{3\}, \{2\}, \{2,3\}\}\). The profit functions for the sets that contain job 2 are:
\begin{align*}
B(\{2\},p) &= -150 + 0.5 \cdot 500 \cdot p\\
&= -150 + 250 \cdot p \\
B(\{2,3\},p) &= -150 - 70 + (0.5 \cdot 500 + 0.5 \cdot 0.5 \cdot 350) \cdot p \\
&= -220 + 337.5 \cdot p 
\end{align*}
The lower bound for the unknown joint success probability before job 2 equals $\pi_1=0.75$. Analyzing the curves in Figure~\ref{fig:backward-step3}, we observe that set \(\{3\}\) is optimal for \(P < 0.923\) and set \(\{2,3\}\) for \(P > 0.923\). As a result, the non-dominated candidates for \(S^*_{\geq 2}\) are \(S^{c^*}_{\geq 2} = \{ \{3\} , \{2,3\} \}\).

 \begin{figure}[htbp]
 \centering
\includegraphics{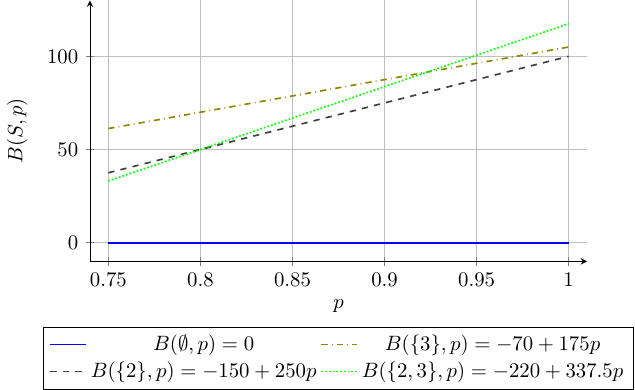}
\caption{Comparison of sets to get \( S^{c^*}_{\geq 2} \). \label{fig:backward-step3}}
\end{figure}

\paragraph{Step 4 (\( j = 1 \)).}
The set of candidate sets for \(S^*\) before comparing profit functions is given by $S^{c}_{\leq 1} =\{ \{3\}, \{2,3\}, \{1,3\}, \{1,2,3\} \}$. Since there are no jobs before job 1, the cumulative success probability is 1. Hence, we are simply left with evaluating the profit functions at \(p =1 \) for all candidate sets:
\begin{align*}
z(\{3\}) &= B(\{3\},p=1) = 105 \\
z(\{2,3\}) &= B(\{2,3\},p=1) = 117.5 \\
z(\{1,3\}) &= B(\{1,3\},p=1) = -75 - 70 + 0.75 \cdot 250 + 0.75 \cdot 0.5 \cdot 350 \\
&= 173.75 \\
z(\{1,2,3\}) &= B(\{1,2,3\},p=1) \\
&= -75 - 70 - 150 + 0.75 \cdot 250 + 0.75 \cdot 0.5 \cdot 500 + 0.75 \cdot 0.5 \cdot 0.5 \cdot 350 \\
&= 145.625.
\end{align*}
Among these sets,  \(\{1,3\}\) yields the highest profit and is therefore the optimal solution: \( S^* = \{1,3\} \). In total, the backward algorithm required evaluating eight out of the 16 possible subsets, namely:
\[
\emptyset, \{4\}, \{3\}, \{3,4\}, \{2\}, \{2,3\}, \{1,3\}, \text{ and }\{1,2,3\}.
\]

\subsection{Speed-ups}
Job \(j\) does not always need to be added to all subsets in \(\mathcal{S}^{c^*}_{\leq j - 1}\) or \(\mathcal{S}^{c^*}_{\geq j + 1}\). In the forward algorithm it will never be optimal to add job \(j\) to set \(S \in \mathcal{S}^{c^*}_{\leq j - 1}\) if this set is only optimal for values of \(r\) smaller than \(\pi_{j} \cdot r_{j}\), as adding this job already leads to a revenue that is too high. Similarly, in the backward algorithm we notice that it is never optimal to add job \(j\) to a subset \(S \in \mathcal{S}^{c^*}_{\geq j + 1}\) if this subset is only optimal for values of \(p\) greater than \( \pi_{j}\) as this leads to a probability that is too low.

\subsection{Keeping track of the upper envelope of the linear profit functions} \label{subsec:upenv}
During the execution of the forward and backward algorithms, we maintain a collection of non-dominated profit functions, denoted \(F^*\). These functions are all linear and can thus be written as \(y = a_i \cdot x + b_i\) where \(a_i = \prod_{i \in S} \pi_i\), \(b_i = R(S) - \sum_{i \in S} c_i\) and \(x = r\) in the forward algorithm and \(a_i = R(S)\), \(b_i = - \sum_{i \in S} c_i\) and \(x = p\) in the backward algorithm. Functions in \(F^*\) are ordered by non-decreasing slope such that \(a_1 \leq a_2 \leq \ldots \leq a_{|F^*|}\) and initially \(F^*\) contains only the profit function corresponding to the empty set. 

Whenever the range of possible values for either \(p\) or \(r\) decreases and \(|F^*| \geq 2\), we check whether we can remove some profit functions. If the upper bound \(\bar{x}\) on \(x\) goes down, we check if the intersection of profit function \(|F^*|\) and \(|F^*|-1\) has an x-value \(> \bar{x}\). If this is the case, we remove the last profit function in the set and repeat the process until this is no longer the case or only one function remains. If the lower bound \(\ubar{x}\) on \(x\) goes up, we check if the intersection of the first two lines has an x-value \(< \ubar{x}\). If this is the case, we remove the first profit function in the set and check again until this is no longer the case or only one profit function remains.

After updating the upper or lower bound on $x$, new profit functions are considered for inclusion into $F^*$. These profit functions are considered one by one as described in Algorithm~\ref{alg:profit}. This algorithm is partially based on the algorithm introduced by 
\cite{preparata1979}. However, we don't use search trees but vectors to store the profit functions, mainly for convenience. The algorithm works as follows. First, if the new function \(y = a^* \cdot x + b^*\) has the same slope as an existing function but a lower intercept, it is immediately discarded as dominated. Then, we calculate \(x_r\), the intersection point with the first existing function having a higher slope and \(x_l\) the intersection point with the last existing function having a lower slope. If \(x_r \leq x_l\) or the profit function is only non-dominated for non-feasible values of \(x\), the new function is discarded. If the profit function is not dominated for some feasible value of \(x\), we check if it dominates some other functions. Dominated functions are removed and the new function is added.

\begin{algorithm}[htbp]
    \caption{Adding \(y = a^* \cdot x + b^*\) to \(F^*\)} \label{alg:profit}
    \begin{algorithmic}
        \If{$a^* = a_i$ for some $i \in \{1,\ldots,k\}$}
            \If{$b_i \geq b^*$}
                \State \Return $F^*$ \Comment{New profit function is dominated}
            \EndIf
        \EndIf
	\State Find \(i_r\) and \(i_l\) and calculate \(x_r\) and \(x_l\) \Comment{See Algorithm \ref{alg:rl}}
        \If{\(x_r \leq x_l\) or \(x_r \leq \ubar{x}\) or \(x_l \geq \bar{x}\)}
            \State \Return $F^*$ \Comment{New profit function is dominated}
        \EndIf
	\State Update \(i_r\) and \(i_l\) \Comment{See Algorithm \ref{alg:dom}}
        \State Delete the profit functions \(\{i_l + 1,\ldots,i_r - 1\}\) from $F^*$
        \If{$i_l = 1$ and $x_l \leq \ubar{x}$}
            \State Delete profit function \(i_l\) from \(F^*\)
        \EndIf
        \If{$i_r = |S|$ and $x_r \geq \bar{x}$}
            \State Delete profit function \(i_r\) from \(F^*\)
        \EndIf
        \State Add profit function with intercept \(b^*\) and slope \(a^*\) to $F^*$
        \State \Return $F^*$
    \end{algorithmic}
\end{algorithm}

\begin{algorithm}[htbp]
\caption{Find \(i_r\) and \(i_l\) and calculate \(x_r\) and \(x_l\) for \(y = a^* \cdot x + b^*\)} \label{alg:rl}
\begin{algorithmic}
\If{\(a^* > a_{|F^*|}\)}
            \State $i_r \gets |F^*| + 1$ \Comment{There is no profit function with a higher slope}
            \State $x_r \gets +\infty$
        \Else
            \State $i_r \gets \min \{i \in \{1,\ldots,|F^*|\} : a_i > a^*\}$
            \State $x_r \gets \frac{b^* - b_{i_r}}{a_{i_r} - a^*}$
        \EndIf
        \If{$a^* < a_{1}$}
            \State $i_l \gets 0$ \Comment{There is no profit function with a lower slope}
            \State $x_l \gets -\infty$
        \Else
            \State $i_l \gets \max \{i \in \{1,\ldots,|F^*|\} : a_i < a^*\}$
            \State $x_l \gets \frac{b_{i_l} - b^*}{a^* - a_{i_l}}$
        \EndIf
\end{algorithmic}
\end{algorithm}

\begin{algorithm}[htbp]
\caption{See if \(y = a^* \cdot x + b^*\) dominates existing functions in \(F^*\) by updating \(i_r\) and \(i_l\)}  \label{alg:dom}
\begin{algorithmic}
\While{$i_l > 1$}
            \State $x'_l \gets \frac{b_{i_l - 1} - b^*}{a^* - a_{i_l - 1}}$
            \If{$x'_l \geq x_l$} \Comment{Intersection with function with lower slope has higher x-value}
                \State $x_l \gets x'_l$ and $i_l \gets i_l -1$
            \Else 
                \State Break
            \EndIf
        \EndWhile
        \While{$i_r < |F^*|$}
            \State $x'_r \gets \frac{b^* - b_{i_r + 1}}{a_{i_r + 1} - a^*}$
            \If{$x'_r \leq x_r$} \Comment{Intersection with function with higher slope has lower x-value}
                \State $x_r \gets x'_r$ and $i_r \gets i_r + 1$
            \Else
                \State Break
            \EndIf
        \EndWhile
\end{algorithmic}
\end{algorithm}

\newpage
\section{Computational experiments}\label{sec:comp}
In this section, we present the results of a computational study evaluating the performance of the proposed solution approaches. Two sets of experiments were conducted. The first uses instances generated according to a standard pattern for scheduling problems, adapted to the specific features of UJSSP\@. A second set of harder instances was generated from the PPP, following the reduction in Theorem~\ref{thm:comp}. This allows us to test our approach on a strongly NP-hard problem. The linear formulation described in Section \ref{sec:milp} was implemented using the solver Gurobi (version 12.0.2), and all code was written in C++. The source code, along with the data and results, is publicly available on GitHub \citep{perneel2025}. All computational experiments were performed on a single IceLake node of the wICE cluster (KU Leuven/UHasselt, VSC), equipped with 2 Intel Xeon Platinum 8360Y CPUs (36 cores each, 2.4 GHz), 256 GB of RAM (3.4 GB per core), and 960 GB local SSD storage.

\subsection{First dataset: uniform instances}
Hereafter, for a given instance of UJSSP, we let \(p\) denote the product of all job success probabilities, also called the \textit{joint probability}. For the first dataset, we generate instances of UJSSP as follows. For each job \(j \in J\), an integer revenue \(r_j\) is drawn uniformly from \([50,500]\). Success probabilities are generated under four schemes: 
(i) each job receives a random probability from \([0.01,0.99[\); 
(ii) a joint probability \(p \sim U[0.01,0.10[\) is distributed over the jobs using weights \(w_j \in [1,1000]\) drawn uniformly, with \(\pi_j = e^{\ln(p)\,w_j / W}\), where \(W = \sum_{j \in J} w_j\); 
(iii) as in (ii), but with \(p \sim U[0.10,0.40[\); and 
(iv) as in (ii), but with \(p \sim U[0.40,0.90[\). 
Finally, each job cost \(c_j\) is drawn uniformly from \([\lceil p\,r_j \rceil, \lfloor \pi_j\,r_j \rfloor]\), excluding values that make the inclusion of job \(j\) trivially optimal or suboptimal.We perform experiments for different values of the number of jobs \(|J|\). We consider small to medium-sized instances, where \(|J|\) ranges from 5 to 150 in increments of 5 (30 values), and large instances, where \(|J|\) ranges from 500 to 10{,}000 in increments of 500 (20 values). For each value of~\(|J|\), 40 instances are generated (10 for each probability generation scheme (i)--(iv)), which yields a total of \( (30 + 20) \times 40 = 2000 \) instances.

\subsubsection{MILP}
We assess the computational performance of the MILP formulation presented in Section~\ref{sec:milp} using the Gurobi optimizer, with a time limit of 20 minutes per instance. The results are summarized in Figure~\ref{fig:MILP}. All instances with up to 40 jobs are solved to optimality within the prescribed time limit, regardless of the probability generation scheme. Under scheme~(i), instances with up to 80 jobs are solved to optimality, and computation times start growing significantly around \(n = 100\). For the remaining schemes, optimization becomes considerably more challenging: computation times start increasing sharply when $n$ exceeds 50, and no instances with more than 75 jobs can be solved within the 20-minute time limit.

\begin{figure}[htbp]
\centering
\begin{subfigure}{0.48\textwidth}
\centering
\includegraphics[width=\textwidth]{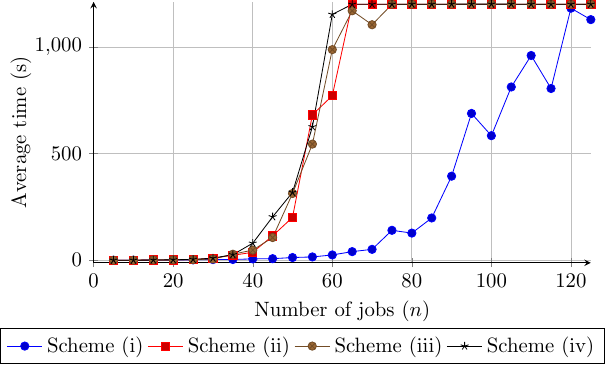}
\caption{Computation time}
\end{subfigure}
\hfill 
\begin{subfigure}{0.48\textwidth}
\centering
\includegraphics[width=\textwidth]{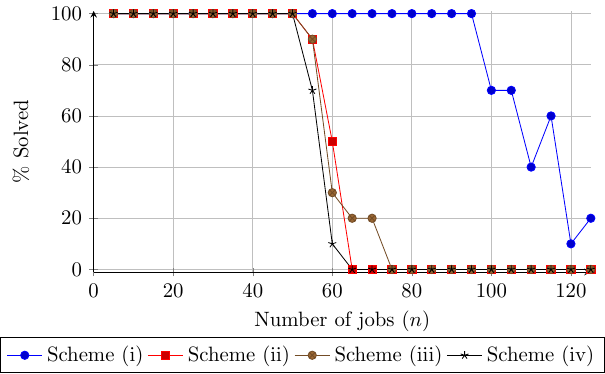}
\caption{Percentage solved}
\end{subfigure}
\caption{MILP: average runtime and percentage solved to optimality vs.\ number of jobs \label{fig:MILP}} 
\end{figure}

To further assess the quality of the MILP formulation, we analyze the \textit{LP\_gap} for each instance, computed as: 
\[
\text{LP\_gap} = \frac{\text{LP bound} - \text{Best found}}{\text{Best found}},
\]
where \textit{Best found} is the value of the best feasible solution obtained by the solver within the 20-minute limit (equal to the optimal solution when optimality is proven), and \textit{LP bound} is the objective value of the LP relaxation at the root node. We also report the \textit{final\_MIP\_gap} after the time limit:
\[
\text{final\_MIP\_gap} = \frac{\text{Best upper bound} - \text{Best found}}{\text{Best found}},
\]
where \textit{Best upper bound} is the best bound found by the solver after 20 minutes. Both metrics are shown in Figure~\ref{fig:MILP_Gap}. Interestingly, instances generated under scheme~(i) display notably larger LP gaps at the root node compared to the other schemes. Nonetheless, these instances tend to be easier to solve to optimality within the time limit. Overall, the results indicate that although the initial LP gaps can be substantial, the solver is typically able to substantially reduce the gap within the 20-minute time limit.

\begin{figure}[htbp]
\centering
\begin{subfigure}{0.48\textwidth}
\centering
\includegraphics[width=\textwidth]{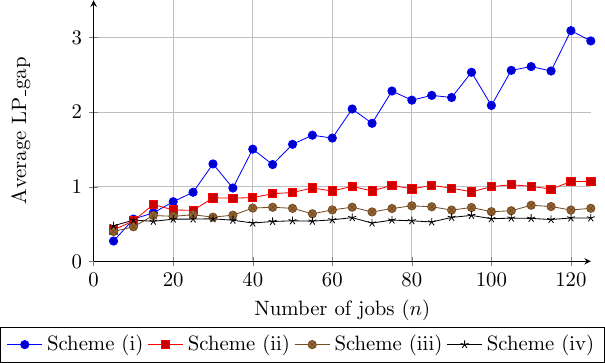}
\caption{Average LP\_gap}
\end{subfigure}
\hfill
\begin{subfigure}{0.48\textwidth}
\centering
\includegraphics[width=\textwidth]{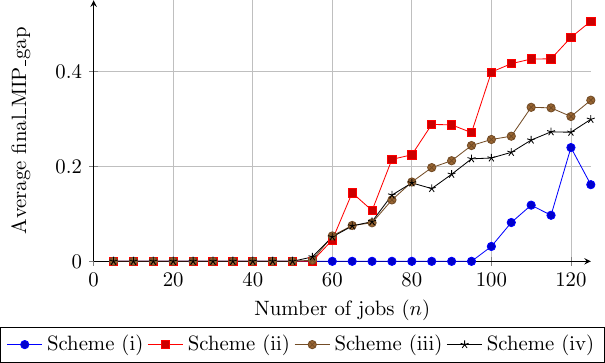}
\caption{Average final\_MIP\_gap}
\end{subfigure}
\caption{MILP: average LP\_gap and average final\_MIP\_gap vs.\ number of jobs \label{fig:MILP_Gap}}
\end{figure}

\subsubsection{Dynamic programming}
Next, we assess the performance of the DP algorithm introduced in Section~\ref{sec:dp}. Figure~\ref{fig:DP} summarizes the computation times for different instance sizes and probability generation schemes. All instances with up to 2{,}500 jobs are solved within 10 minutes, while the instances with 3{,}000 jobs often take more than one hour of computation. Consistent with the MILP results, instances where probabilities are generated through scheme~(i) are noticeably easier to solve. For the DP algorithm, it is more straightforward to explain this than for the MILP formulation. The DP algorithm runs in $O(n \cdot \sum_{j\in J} c_j )$ time. Each cost value \(c_j\) is drawn from the interval \([\lceil p\cdot r_j \rceil, \lfloor \pi_j \cdot r_j \rfloor]\). When probabilities are generated according to schemes~(ii)--(iv), value \(\pi_j r_j\) tends to approach \(r_j\) as \(n\) increases, while this is not the case under scheme~(i). In contrast, under scheme~(i), the expected value of \(p \cdot r_j\) will decrease as \(n\) grows, while this is not the case under schemes~(ii)--(iv). The average cost of a job will therefore be lower under scheme~(i) than under schemes~(ii)--(iv), especially as \(n\) grows. As the complexity of the DP algorithm grows linearly with the total cost of the jobs, instances generated under scheme~(i) are solved in shorter computation times.

\begin{figure}[htbp]
\centering
\includegraphics[width=0.48\textwidth]{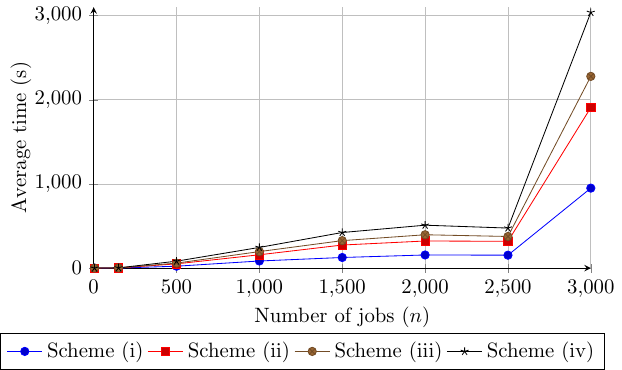}
\caption{DP: average runtime vs.\ number of jobs \label{fig:DP}}
\end{figure}

\subsubsection{Stepwise exact algorithms}
We now evaluate the performance of the forward and backward stepwise exact algorithms on the benchmark instances. This experiment includes instances with up to 10{,}000 jobs, which is the largest size generated in our study. The corresponding results are summarized in Figure~\ref{fig:stepw}. When job success probabilities are drawn according to scheme~(i), all instances are solved in less than one second of CPU time, even for \(n = 10{,}000\). For schemes~(ii)--(iv), instances with up to 1{,}500 jobs are solved within one second, after which computation times increase gradually, reaching approximately one minute for the forward algorithm and two minutes for the backward algorithm at \(n = 10{,}000\). Although the forward stepwise algorithm is generally faster on average, performance differences between the two methods remain instance-dependent.

\begin{figure}[htbp]
\centering
\begin{subfigure}{0.48\textwidth}
\centering
\includegraphics[width=\textwidth]{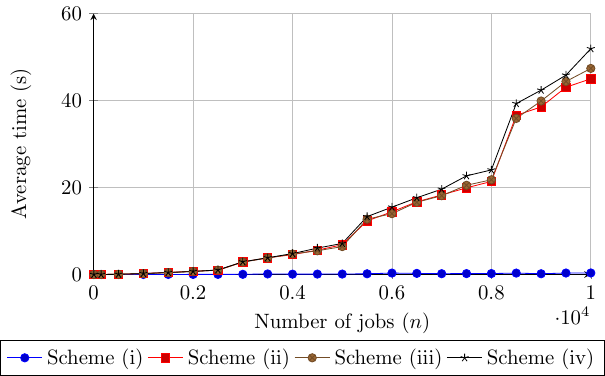}
\caption{Forward method}
\end{subfigure}
\hfill
\begin{subfigure}{0.48\textwidth}
\centering
\includegraphics[width=\textwidth]{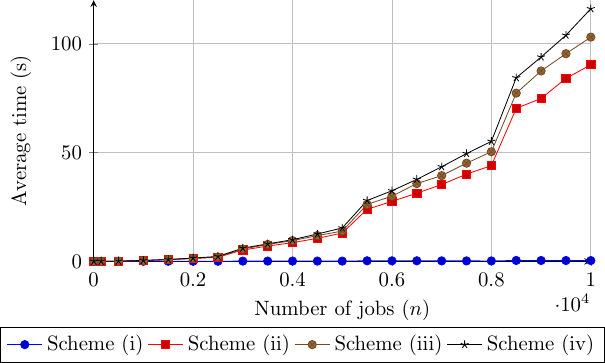}
\caption{Backward method}
\end{subfigure}
\caption{Forward and backward stepwise exact methods: average runtime vs.\ number of jobs \label{fig:stepw}}
\end{figure}

\subsection{Second dataset: Product Partition}
Although the PPP is a well-known combinatorial optimization problem and a standard source of hardness proofs, known to be strongly NP-hard \citep{ng2010}, it has not been studied as a computational benchmark in its own right. A straightforward solution approach applies logarithmic transformation to the input integers and scales the resulting values to form an equivalent instance of the Partition problem. However, achieving sufficient precision to distinguish between products can require integer magnitudes of the order of the square root of the total product. A more recent reduction proposed by \cite{costandin2024} produces an instance of the Subset Sum Problem having polynomial size in the PPP input, but requires integer factorization of the original numbers, which can itself be computationally expensive.

In this work, we present the first computational results for PPP instances, to the best of our knowledge. Rather than converting PPP into SSP, we generate UJSSP instances through the reduction introduced in Theorem~\ref{thm:comp} and solve them using the stepwise exact algorithms. Each optimal solution of the corresponding UJSSP instance yields a partition of the original integers whose subset products are as close as possible to the square root of the total product. Our main aim is to test the performance of our stepwise exact algorithms on more challenging instances. Note that the DP algorithm cannot be applied to these instances because the reduction in Theorem~\ref{thm:comp} produces non-integer costs.

When an instance of UJSSP is constructed from a PPP instance according to Theorem~\ref{thm:comp}, all \(Z\)-ratios are identical. Consequently, the order in which integers (or equivalently, jobs) are considered in the stepwise algorithms becomes irrelevant, and the forward and backward stepwise methods become equivalent. Consider a PPP instance defined by the set of integers \(a_1, \ldots, a_n\), arranged according to some permutation \(\sigma = (\sigma(1), \ldots, \sigma(n))\), where \(\sigma(k)\) denotes the \(k\)-th integer in the sequence. For any subset \(S_{\leq j} \subseteq \{\sigma(1), \ldots, \sigma(j)\}\), we define the associated profit function—corresponding to both \(F(\cdot; r)\) and \(B(\cdot; p)\)—as
\[
-\log\!\left(\prod_{a_i \in S_{\leq j}} a_i\right)
- \frac{\sqrt{W}}{\prod_{a_i \in S_{\leq j}} a_i}\, x,
\]
where \(x\) is bounded by 
\[
\frac{1}{\prod_{a_i \in \{\sigma(j),\ldots,\sigma(n)\}} a_i} \leq x \leq 1.
\]
If, for a given subset \(S_{\leq j}\), no feasible value of \(x\) yields a higher profit than other subsets of \(\{\sigma(1), \ldots, \sigma(j)\}\), that subset and all its extensions can be safely discarded, as they cannot lead to an optimal solution. Since the ordering of jobs does not affect the outcome, we evaluate three ordering strategies: sorting jobs in ascending order, descending order, and in random order of the corresponding integers.

Two types of PPP instances were generated; these were subsequently converted into UJSSP instances according to Theorem~\ref{thm:comp} to generate a second dataset of instances of UJSSP\@. The first type (\emph{Type~I}) does not guarantee the existence of a partition with equal products; each integer \(a_i\) is independently drawn from the uniform distribution on \([2, 100]\). The second type (\emph{Type~II}) is constructed to ensure that a perfect partition exists, as follows. We first draw a random integer \(j \in [1, n-1]\) and generate \(a_1, \ldots, a_j\) uniformly from \([2,100]\); let \(P = \prod_{i=1}^{j} a_i\). We then seek \((n-j)\) additional integers \(a_{j+1}, \ldots, a_n\) such that their product also equals \(P\). To achieve this, we first compute the prime factorization of \(P = p_1^{n_1} p_2^{n_2} \cdots p_k^{n_k}\) and iteratively group its prime factors into \((n-j)\) integer products, each within \([2,100]\). If factorization limits (\(\sum_i n_i < n-j\)) or value bounds (\(\sqrt[n-j]{P} > 100\)) make such grouping impossible, the instance is discarded. During construction, we randomly assign subsets of the remaining prime factors to each \(a_{j+i}\), ensuring feasibility. If no valid assignment is found, the instance is discarded. Instance sizes range from \(n = 5\) to \(100\) in increments of five. For each size, we generate 20 instances (10 of Type~I and 10 of Type~II), yielding a total of 400 instances.

The computational results of the stepwise methods for the second dataset are presented in Figure~\ref{fig:PPP}. For instances guaranteed to have a feasible partition (Type~II), all cases with up to 25 integers are solved within 20 minutes, whereas none with more than 50 complete in that time. 
For Type~I (randomly generated instances), solvability drops even faster: only instances with up to 20 integers are solved, with computation times increasing sharply thereafter. Among the tested orderings, processing integers in descending order consistently performs best, while ascending order generally performs worst. Compared to the first dataset, substantially smaller instances can be solved to optimality.  This increased difficulty stems from both structural and numerical factors. First, PPP instances produce far fewer dominated subsets, forcing the algorithms to evaluate many more candidate combinations (see Table~\ref{tab:num_comb}). Although pruning still eliminates a large fraction of subsets, the proportion of excluded candidates is markedly lower than in the first dataset. Second, instances in the second dataset require evaluating products of up to \(n\) integers between 2 and 100, leading to very large intermediate values and necessitating the use of arbitrary-precision arithmetic libraries. Operations using these libraries are more memory-intensive and computationally expensive than standard integer operations, explaining the longer runtimes even for a comparable number of evaluated combinations.

\begin{figure}[htbp]
\centering
\begin{subfigure}{0.48\textwidth}
    \centering
    \includegraphics[width=\textwidth]{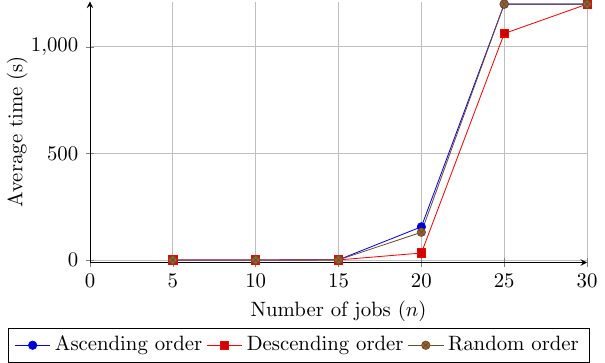}
    \caption{Type I: computation time}
\end{subfigure}
\hfill
\begin{subfigure}{0.48\textwidth}
    \centering
    \includegraphics[width=\textwidth]{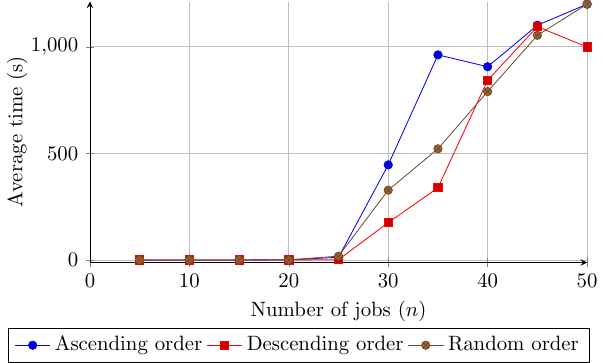}
    \caption{Type II: computation time}
\end{subfigure}
\vskip\baselineskip
\begin{subfigure}{0.48\textwidth}
    \centering
    \includegraphics[width=\textwidth]{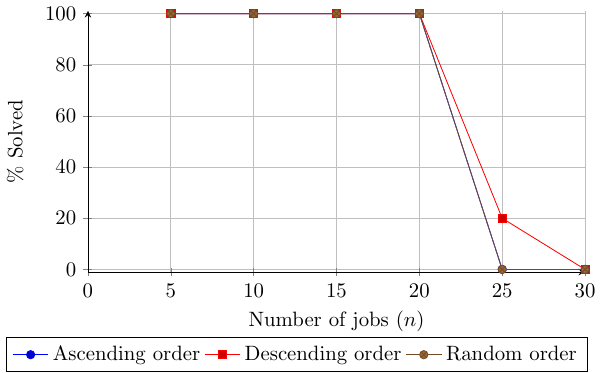}
    \caption{Type I: \% solved}
\end{subfigure}
\hfill
\begin{subfigure}{0.48\textwidth}
    \centering
    \includegraphics[width=\textwidth]{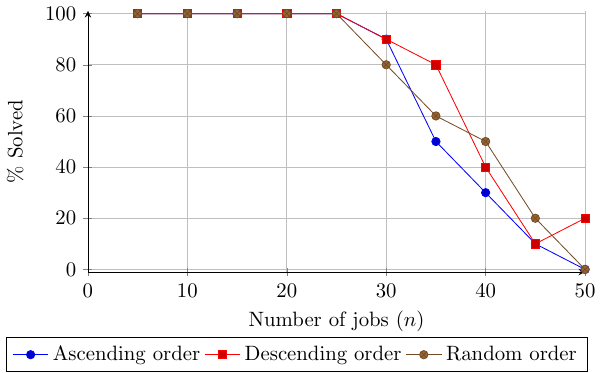}
    \caption{Type II: \% solved}
\end{subfigure}
\caption{Stepwise methods applied to dataset 2 (PPP): average runtime and \% solved to optimality vs.\ number of jobs \label{fig:PPP}}
\end{figure}

\begin{table}[htbp]
\centering
\begin{tabular}{@{}c@{\quad}c@{\quad}c@{\quad}c@{\quad}c@{\quad}c@{\quad}c@{}}
$n$ & PPP I & PPP II & UJSSP (i) & UJSSP (ii) & UJSSP (iii) & UJSSP (iv) \\ \hline
5  & 27.0 & 12.9 & 13.7  & 13.9  & 14.2  & 13.5 \\
10 & 587.3  & 181.6   & 31.3  & 33.4  & 33.2  & 35.4 \\
15 & 12,991.0 & 1,413.4 & 54.2  & 60.0  & 61.0  & 58.8 \\
20 & 210,402.0 & 14,488.1 & 78.3  & 82.3  & 89.2  & 92.2 \\
25 & -- & 75,220.9 & 101.9 & 119.9 & 121.2 & 135.0 \\
30 & -- & 469,513.1 & 137.8 & 185.0 & 160.4 & 172.8 \\ \hline
\end{tabular}
\caption{Average number of combinations considered using stepwise methods for different instance generation methods.\label{tab:num_comb}}
\end{table}

\section{Conclusions}\label{sec:conc}
In this paper, we have introduced the \emph{Unreliable Job Selection and Sequencing Problem} (UJSSP), a stochastic scheduling problem in which a subset of jobs must be selected and sequenced on a single machine subject to permanent failure. The objective is to maximize the expected net profit, accounting for job-specific rewards, costs, and probabilities of success. We have shown that UJSSP generalizes the \emph{$n$-out-of-$n$ Test Sequencing Problem}, is equivalent to the \emph{College Selection Problem}, and constitutes a special case of the \emph{Simultaneous Selection Problem}. These connections have enabled us to identify two polynomially solvable cases—when all job costs or all success probabilities are identical—where a greedy algorithm yields an optimal solution.

We have established the NP-hardness of the general UJSSP through a reduction from the strongly NP-hard \emph{Product Partition Problem} and proposed several exact solution approaches: a compact mixed-integer linear programming (MILP) formulation, a dynamic programming algorithm, and two novel stepwise exact algorithms. The dynamic programming approach runs in pseudopolynomial time, allowing us to rule out strong NP-hardness for the integer-cost variant of UJSSP. The stepwise algorithms iteratively build and prune candidate subsets by exploiting structural properties of the objective function.

Computational experiments demonstrate the effectiveness of the proposed methods. The MILP formulation solves instances with up to roughly 100 jobs, the dynamic programming algorithm handles up to 3{,}000 jobs, and the stepwise exact methods solve instances with up to 10{,}000 jobs within two minutes. Moreover, by applying the stepwise methods to instances derived from the Product Partition Problem, we provide, to the best of our knowledge, the first exact computational results reported for that problem.

Several directions for future research remain open. It is unclear whether the general UJSSP—with possibly fractional costs and rewards—is weakly or strongly NP-hard, and whether the greedy algorithm admits a constant-factor approximation in the general setting. The only similar result in this research area is a constant-factor approximation for the greedy algorithm applied to a somewhat similar problem, called Product Knapsack by \cite{pferschy2021}. The complexity of UJSSP with uniform rewards also remains to be established. Also, while the stepwise algorithms exhibit strong performance for standard UJSSP instances, their effectiveness on instances derived from the Product Partition Problem is more limited. More generally, an exploration of how stepwise methods that exploit the linear dependence of the objective value on a single unknown parameter for certain subproblems, might be extended to other combinatorial problems, or identifying the conditions under which they perform best, represents a promising direction for future research. Another relevant direction for future research is to consider a more general model in which selected jobs have to be allocated to multiple machines. In this case, even when $S$ is given, the resulting scheduling problem is NP-hard \citep{agnetis2009}, but practically efficient algorithms may be worth pursuing. 

\section*{Acknowledgement}
We would like to express our gratitude to dr.\ Ben Hermans for the idea of the dynamic programming algorithm.

\bibliographystyle{plainnat}
\bibliography{bib_file}

\end{document}